\documentclass[11pt]{article}
\usepackage[numbers]{natbib}
\usepackage{packages}
\usepackage{editing-macros}
\usepackage{formatting}
\usepackage{statistics-macros}

\usepackage{booktabs}
\usepackage{mathtools}
\usepackage{mathrsfs}

\begin{document}

\abovedisplayskip=8pt plus0pt minus3pt
\belowdisplayskip=8pt plus0pt minus3pt

\begin{center}
  {\huge Empirical Likelihood for Nonsmooth Functionals} \\
  \vspace{.5cm} {\Large Hongseok Namkoong} \\
  \vspace{.2cm}
  {\large Decision, Risk, and Operations Division, Columbia Business School} \\
  \vspace{.2cm}
  \texttt{namkoong@gsb.columbia.edu}
\end{center}

\begin{abstract}%
  Empirical likelihood is an attractive inferential framework that respects
natural parameter boundaries, but existing approaches typically require
smoothness of the functional and miscalibrate substantially when these
assumptions are violated.  For the optimal-value functional central to policy
evaluation, smoothness holds only when the optimum is unique---a condition
that fails exactly when rigorous inference is most needed where more complex
policies have modest gains.  In this work, we develop a bootstrap empirical
likelihood method for partially nonsmooth functionals.  Our analytic workhorse
is a geometric reduction of the profile likelihood to the distance between the
score mean and a level set whose shape (a tangent cone given by
nonsmoothness patterns) determines the asymptotic distribution.  Unlike the
classical proof technology based on Taylor expansions on the dual optima, our
geometric approach leverages properties of a deterministic convex program and
can directly apply to nonsmooth functionals. Since the ordinary bootstrap is
not valid in the presence of nonsmoothness, we derive a corrected multiplier
bootstrap approach that adapts to the unknown level-set geometry.


\end{abstract}


\newcommand{\EL}{\mathrm{EL}}
\newcommand{\KL}{\mathrm{KL}}
\newcommand{\BL}{\mathrm{BL}}
\newcommand{\dd}{\,\mathrm{d}}


\section{Introduction}
\label{sec:intro}

Empirical likelihood (EL) is a nonparametric method of inference that retains
many of the virtues of parametric likelihood without committing to a
parametric model.  Given data $X_1, \dots, X_n$, one maximizes the product
$\prod_{i=1}^n p_i$ over probability vectors $(p_1, \dots, p_n)$ subject to
the constraint that a parameter of interest takes a hypothesized value.  The
resulting likelihood-ratio statistic converges to $\chi^2$ under the null,
yielding confidence regions that are automatically shaped by the data, respect
natural parameter boundaries, and satisfy a Bartlett
correction~\citep{Owen01}.  Because the construction requires no variance
estimation and no distributional specification, EL confidence regions have
complementary advantages to normal-theory counterparts in finite samples: they
are range-respecting, transformation-invariant, and self-normalizing.  These
properties have made EL an attractive foundation for inference on means,
quantiles, regression parameters, and smooth functions thereof.

For a decision-maker choosing among competing options under uncertainty, the
appeal of EL is especially direct.  The shape adaptation means that confidence
regions contract along directions the data resolve well and expand where they
do not, so the resulting bounds on decision-relevant quantities are neither
artificially wide nor miscalibrated.  This paper is motivated by whether these
advantages can be extended to a class of problems that is central to modern
data-driven decision-making but falls outside the scope of existing EL theory:
inference on the average reward under the best available policy, which often
guides investment decisions for interventions with an initial fixed setup
cost.

\paragraph{Motivating application} A large and growing class of decision
problems requires evaluating candidate policies from data collected under a
different, unknown behavioral policy.  In healthcare, a clinician considering
a new treatment assignment rule sees only the outcomes of the assignments that
were actually made; in e-commerce, a platform testing pricing strategies
observes revenue only under the prices that were historically
charged. Formally, the observed data consist of covariates, an action or treatment
drawn from the behavioral policy, and an outcome.  The decision-maker has
several candidate policies, each mapping covariates to actions, and the
quantity of interest is the value of the best available option,
\begin{equation}
  \tau(P_0)
  = \max_{1\le j\le J}
  \left\{ \theta_j(P_0) \defeq \E_{P_0}[Y^{\pi_j}]
    \right\},
  \label{eq:intro-target}
\end{equation}
where $Y^{\pi_j}$ is the potential outcome under policy~$\pi_j$.

Because the counterfactual outcomes are not directly observed, estimating each
policy value requires modeling nuisance functions---the propensity score and
the conditional outcome means---making the problem genuinely semiparametric.
Over the past decade, the double/debiased machine learning (DML) literature
has provided a clean reduction of this
difficulty~\citep{ChernozhukovChDeDuHaNeRo18}.  After splitting the sample and
fitting nuisance models on an auxiliary fold, one forms augmented
inverse-probability weighting (AIPW) scores on the evaluation fold:
\begin{equation}
  X_{ni,j}
  =
  \what m_{\pi_j(W_i)}(W_i)
  + \frac{\onevec\{A_i = \pi_j(W_i)\}}
  {\what e_{\pi_j(W_i)}(W_i)}
  \bigl(Y_i - \what m_{A_i}(W_i)\bigr),
  \qquad j = 1, \dots, J.
  \label{eq:intro-aipw}
\end{equation}
Under standard overlap and rate conditions on the nuisance
estimators~\citep{ChernozhukovChDeDuHaNeRo18, AtheyWa21, KallusUe20}, the
score vector satisfies a central limit theorem centered at the true
policy-value vector, and its sample mean converges at $\sqrt{n}$-rate.  The
debiasing step thus converts a semiparametric problem into a nonparametric
one: inference on the maximum of a $\sqrt{n}$-normal score mean.  Given these
score vectors, it is natural to apply empirical likelihood.

\paragraph{Why are ties relevant?}

In many applied settings, the primary question is not ``which policy is
best?''\ but ``does the best available policy improve enough over a
simple default to justify its implementation cost?''  A decision-maker
comparing a complex personalized treatment rule (which requires expensive
real-time data collection and infrastructure) against a simpler rule that
ignores certain covariates is performing exactly this kind of opportunity
sizing.  A growing body of empirical work suggests that the gains from
personalization are often modest: in education,
personalized tutoring policies frequently offer only marginal
improvements over static
alternatives~\citep{NieReBr23}; in
healthcare, heterogeneous treatment effects are routinely small relative
to average effects~\citep{ChernozhukovDeDuFe18}; and in
welfare-maximizing policy design, the improvement from targeting over a
uniform rule can be negligible~\citep{KitagawaTe18}.  When several
candidate policies achieve similar expected rewards, the set of
optimal policies contains multiple elements---and it is in this
near-tied regime that a rigorous confidence bound on the best attainable
value is most decision-relevant.  It is also where existing methods
break down.

\paragraph{Limitations of existing EL approaches}

The existing EL machinery, however, does not accommodate the nonsmooth max
functional~\eqref{eq:intro-target}.  Both classical EL results for nonlinear
functionals~\citep{Owen01, DuchiGlNa21} and semiparametric EL
extensions~\citep{HjortMcKe09, BravoEsVa20} require that the target map be
smooth.  \citet{MolanesLopezVaKe09} study EL for non-smooth criterion
functions, but their framework targets U-statistics and related criteria rather
than the directionally differentiable max functional that arises in policy
evaluation.  Applying them to~\eqref{eq:intro-target} forces one of two
workarounds.  The first is \emph{projected joint EL}: build a joint confidence
region for the full policy-value vector and project onto the maximum.  This is
valid but pays a dimension-$J$ critical value for a scalar target; at the
$95\%$ level with twenty policies, the resulting bound is significantly wider
than necessary (Proposition~\ref{prop:unique-inflation}).  The second is
\emph{selected-policy EL}: estimate the empirically best policy and run
one-dimensional EL on that coordinate.  This ignores selection uncertainty and
undercovers badly near ties.

Beyond the conservatism-versus-coverage tradeoff, semiparametric bootstrap
calibration can create a computational bottleneck.  The bootstrap method
of~\citet{HjortMcKe09} requires refitting all nuisance models inside every
resample, turning a moderate inferential problem into an expensive learning
problem repeated hundreds of times.  For smooth targets,
\citet{BravoEsVa20} sidestep this cost by replacing the original moment
with an influence-function-corrected equation, restoring the Wilks property
without resampling-level refitting.  Their fix, however, relies on
smoothness of the target functional and does not extend to the nonsmooth
max functional~\eqref{eq:intro-target}.

A further difficulty is nonregularity.  When one policy is uniquely
optimal, the max functional is Hadamard differentiable (HD) and a
standard $\chi^2$ calibration applies~\citep{DuchiGlNa21}.  But the
uniqueness assumption is itself at least as hard to verify as the
inferential task: certifying that one policy is strictly better than all
others requires the same kind of statistical evidence as the confidence
bound one is trying to construct.  When two or more policies are tied
or near-tied---the regime argued above to be the most
decision-relevant---the max functional is only Hadamard directionally
differentiable (HDD) and the geometry of the problem changes.  The local level set is no longer a
hyperplane but a \emph{cone} whose faces correspond to different
tie-breaking patterns among the optimal policies.  The standard
$\chi_1^2$ calibration, which is correct for hyperplanes,
systematically underestimates the critical value for cones because it
ignores this combinatorial face structure.  The ordinary bootstrap
inherits this failure: it always estimates the level set as a hyperplane
(since sample ties occur with probability zero), and therefore converges
to $\chi_1^2$ regardless of the true geometry~\citep{FangSa19}.  Our
experiments in Section~\ref{sec:sim} show that coverage drops well
below the nominal level when two or more policies are near-tied.  Corrected bootstrap methods for
nonsmooth functionals exist~\citep{FangSa19, HongLi18, HongLi20,
  BeutnerZa16}, but they target plug-in estimators rather than
profile-divergence statistics and do not inherit the shape-respecting,
self-normalizing properties that make EL attractive in the first place.

\subsection{Contributions}

We develop a \emph{score-profile empirical likelihood} that targets the
optimal value directly. When applied to our motivating problem, our method
operates on the
AIPW score vectors~\eqref{eq:intro-aipw} produced by a single nuisance fit and
requires no refitting inside the bootstrap loop.

Our central technical contribution is a \emph{primal} geometric proof strategy
that replaces the dual Lagrange-multiplier expansion used in classical EL
theory.  In the standard approach---developed by~\citet{Owen01} and extended
to semiparametric settings by~\citet{HjortMcKe09}
and~\citet{BravoEsVa20}---one writes the EL ratio as a function of a dual
parameter (the Lagrange multiplier enforcing the moment constraint), performs
a Taylor expansion of the dual optimality conditions around the true parameter
value, and inverts to obtain an asymptotic $\chi^2$ limit.  This dual route
works well for smooth estimating equations, but it encounters two obstacles
for nonsmooth targets like~\eqref{eq:intro-target}: the profile over the
nonsmooth level set does not reduce to a single moment constraint with a
smooth Lagrangian, so the standard KKT expansion has no smooth analogue.

Our primal proof avoids this obstacle by working directly in the
\emph{weight space} of the empirical likelihood.  The key reduction
(Theorem~\ref{thm:exact-level-set}) shows that the Euclidean profile
statistic---the minimum $\ell_2$ distance from uniform weights to the
set of reweightings consistent with the target value---equals the
Mahalanobis distance from the score mean to the rescaled level set.
This is a finite-dimensional convex projection whose behavior depends
entirely on the geometry of the target map, not on the mechanics of a
dual expansion.  Whether the level set is locally a hyperplane or a
cone then determines the asymptotic distribution and bootstrap validity.
In the \emph{smooth} (HD) case, the level set converges to a
hyperplane, the Mahalanobis projection is a scalar ratio, and the
profile statistic converges to $\chi_1^2$.  An ordinary score
bootstrap---applied to the frozen score vectors, with no nuisance
refit---is consistent (Theorem~\ref{thm:hd-boot}).

In the \emph{nonsmooth} (HDD) case, the level set converges to a tangent
cone, the limit becomes the squared Mahalanobis distance from a Gaussian
vector to this cone, and the ordinary bootstrap fails.  We develop a
corrected calibration procedure based on a multiplier bootstrap approach
(Theorem~\ref{thm:hdd-boot}) that estimates the tangent cone from the data
and computes the bootstrap distance to the estimated cone.  The same
conclusions extend to general smooth $f$-divergences under a local
quadratic-geometry condition, because the primal argument never invokes a
Fenchel conjugate and therefore imposes no smoothness conditions on the
divergence beyond local quadratic behavior near the uniform distribution.

We note two differences from existing semiparametric EL
frameworks. First, the entire primal analysis is deterministic once the score
array is given: the geometric reduction is a statement about the structure of
a convex program, not about the probabilistic behavior of a dual optimizer.
This makes the extension to the bootstrap immediate---the same convex
projection is applied to the bootstrap score vectors, and cone estimation
handles the nonsmooth case.  By contrast, the standard dual approach (e.g.,
\citet{HjortMcKe09}) requires a stochastic expansion of the Lagrange
multiplier that must be controlled uniformly over bootstrap resamples, an
analysis whose difficulty scales with the complexity of the moment condition
and that has no natural extension to nonsmooth targets.

Second, because the geometric reduction operates on the score array rather
than on the original data, the bootstrap inherits the score-level structure:
once the AIPW scores have been formed, every subsequent computation is $O(nJ)$
per resample, with no reference to the nuisance models.  Any semiparametric EL
method whose bootstrap is coupled to the nuisance-estimation step---including
the framework of \citet{HjortMcKe09}---must refit all nuisance models inside
every resample, turning each bootstrap draw into a full semiparametric learning
problem.  Even in our small-scale simulations, this difference leads to
multiple orders of magnitude improvement in computational effort.

Although we develop the theory in detail for the max functional---the
leading case in policy evaluation and one whose combinatorial structure
yields closed-form profile bounds---the core results (the primal
geometric reduction, the level-set characterization, and both bootstrap
procedures) apply to any functional satisfying
Assumption~\ref{ass:levelset}, including other Hadamard directionally
differentiable maps such as quantile and supremum-norm functionals.


\section{Related work}
\label{section:related-work}

Our work connects four strands of literature.

\paragraph{Classical empirical likelihood.}  Since \citet{Owen88}'s seminal work
that developed the theory of empirical likelihood for means, several authors
have extended the framework to smooth functions of means and regression
settings, and established the $\chi^2$ calibration, Bartlett correctability,
and range-respecting property (see~\cite{Owen01} and references therein).  Our
primal proof strategy can be viewed as revisiting Owen's original
likelihood-ratio construction from the weight-space perspective rather than
the classical dual Lagrange multiplier perspective~\cite{Owen01}, and showing
that this viewpoint extends naturally to directionally differentiable maps
where the dual route encounters fundamental obstructions.  In particular,
compared to \citet{DuchiGlNa21}'s recent generalized EL result for Hadamard
differentiable functionals, our approach provides calibrated confidence
intervals even for Hadamard differentiable functionals such as the infimum
functional when there is multiple optima.
\citet{MolanesLopezVaKe09} extend EL to non-smooth criterion functions such as
absolute deviations and U-statistics, establishing a Wilks-type theorem under a
local smoothing device.  Their non-smoothness is of a different kind from ours:
it lies in the criterion used to define the EL constraint, whereas in our
setting the moment conditions are smooth but the functional mapping scores to
the target parameter is only directionally differentiable.

\paragraph{Semiparametric empirical likelihood.}  \citet{HjortMcKe09} extend
empirical likelihood to plug-in nuisance parameters and slower-than-$\sqrt n$
rates, establishing both a Wilks phenomenon and bootstrap validity for
semiparametric estimating equations.  Their results are broad and powerful but
require that the target be defined as the solution to a smooth moment
condition; the max functional~\eqref{eq:intro-target} does not have this
structure.  \citet{BravoEsVa20} show that naive two-step plug-in EL can fail
to have a $\chi^2$ limit even for smooth targets---the nuisance estimation
bias contaminates the EL ratio---and restore the Wilks property by replacing
the original moment with an influence-function-corrected equation.  Their
correction is essential for any practical semiparametric EL implementation,
but like \citet{HjortMcKe09} it applies to smooth estimating equations.  Our
contribution departs from both by profiling the \emph{target functional
  itself}---rather than an estimating equation whose solution equals the
target---and by handling the nonsmooth case where the level-set geometry is a
cone rather than a hyperplane.

\paragraph{Debiased inference and policy evaluation.}
\citet{ChernozhukovChDeDuHaNeRo18} develop the double/debiased machine
learning framework, providing general conditions under which
cross-fitted orthogonal scores achieve $\sqrt n$-normality for
semiparametric targets even when nuisance functions are estimated at
slower-than-parametric rates.  This framework supplies the score vectors
that our method takes as input.  \citet{AtheyWa21} adapt these ideas to
policy learning and off-policy evaluation, deriving regret bounds and
inference for individual policy values under treatment-effect
heterogeneity.  In a complementary direction, \citet{LiBr26} study
optimal value inference in the semiparametric setting and derive
debiased semiparametric efficiency guarantees under uniqueness of
optima.  Our contribution is in what happens after the scores are formed:
constructing a profile-divergence statistic for the nonlinear scalar
target and calibrating it correctly in both the smooth and nonsmooth
regimes.

\paragraph{Inference for nonsmooth functionals.}  When two or more policies are
tied for optimality, the max functional~\eqref{eq:intro-target} is only
(Hadamard) directionally differentiable, and the ordinary bootstrap fails.
\citet{FangSa19} develop a general bootstrap correction for directionally
differentiable maps, establishing consistency of a plug-in multiplier
bootstrap under cone-approximation conditions.  Their framework is the closest
antecedent to our nonsmooth theory; the difference is that they calibrate a
plug-in estimator while we calibrate a profile-divergence statistic.  The two
approaches draw on the same local cone information but produce different
inferential objects: a corrected confidence interval versus a directly
invertible profile bound.  \citet{HongLi18} provide bootstrap consistency
results for directionally differentiable functions with applications to
conditional value-at-risk and related risk measures; \citet{HongLi20} extend
the analysis to M-estimators of nonsmooth parameters.  Our work studies the
same directional differentiability structure but in the empirical likelihood
framework, which yields the simplex formulation of
Proposition~\ref{prop:max-lower} and the computational advantages of
score-level resampling.

The remainder of the paper is organized as follows.
Section~\ref{sec:geometry} introduces the score-profile framework and
establishes the central geometric reduction that connects the profile
statistic to a Mahalanobis distance to a level set.  Section~\ref{sec:regular}
develops the smooth (Hadamard differentiable) case, derives closed-form
profile bounds for the policy problem, and shows that an ordinary score
bootstrap is consistent.  Section~\ref{sec:nonregular} handles the nonsmooth
case of tied policies, develops the corrected multiplier bootstrap, and
describes the full inferential procedure.  Section~\ref{sec:sim} reports
simulation evidence on coverage, tightness, and computation time.


\section{Score-profile framework and geometric reduction}
\label{sec:geometry}

We begin by developing our inferential framework.
First, we
define the score-profile divergence statistic that underpins all
subsequent results, then state the score regularity conditions and
the running policy-evaluation example
(Section~\ref{sec:setup}).
Section~\ref{sec:reduction} establishes the central geometric
reduction that connects the profile statistic to a Mahalanobis distance,
and Section~\ref{sec:levelset-theory} shows how the first-order
shape of the target functional's level set determines the limiting
distribution. 

Our inferential primitive is a triangular array of score vectors
$X_{n1}, \dots, X_{nn} \in \R^d$ obtained after sample splitting and
nuisance estimation.  In policy evaluation, the auxiliary sample is used
to fit propensity and outcome models, and the evaluation sample produces
doubly robust scores for each candidate policy (see
Example~\ref{ex:aipw} in Section~\ref{sec:setup} for details).  Given
the score array, let $\phi : \R^d \to \R$ be the target map with
$\tau_0 = \phi(\theta_0)$, and let
$f : \R_+ \to \R \cup \{+\infty\}$ be convex with
$f(1) = f'(1) = 0$ and $f''(1) = 2$.  The \emph{score-profile
  $f$-divergence statistic} is
\begin{equation}
  R_{n,f}(\tau_0)
  \defeq
  \inf_{q \in \Delta_n}
  \Bigl\{
  n D_f(q \Vert 1/n) :
  \phi\!\Bigl(\textstyle\sum_{i=1}^n q_i X_{ni}\Bigr) = \tau_0
  \Bigr\},
  \label{eq:score-profile}
\end{equation}
where $\Delta_n = \{q \ge 0 : \sum_i q_i = 1\}$ and
$D_f(q \Vert 1/n) = n^{-1} \sum_{i=1}^n f(nq_i)$.  The Euclidean case
$f_2(t) = (t-1)^2$ will play a distinguished role:
\begin{equation}
  R_{n,2}(\tau_0)
  \defeq
  \inf_{q \in \Delta_n}
  \Bigl\{
  \textstyle\sum_{i=1}^n (nq_i - 1)^2 :
  \phi\!\Bigl(\textstyle\sum_{i=1}^n q_i X_{ni}\Bigr) = \tau_0
  \Bigr\}.
  \label{eq:euclidean-profile}
\end{equation}
The statistic~\eqref{eq:score-profile} profiles out the scalar target $\tau_0$
while reweighting the score vectors, so that the inferential problem reduces
to a finite-dimensional divergence minimization on the already-computed
scores.  No nuisance model is revisited after the scores have been formed.  We
focus on the $\chi^2$ divergence to define profiles for the rest of this work
and omit the extensions to the general $f$-divergence case as it is a consequence of
tedious Taylor expansions.

\subsection{Score regularity}
\label{sec:setup}

We work directly with the score array and impose the following
high-level regularity condition.

\begin{assumption}[Score regularity]
  \label{ass:score}
  There exists a positive definite matrix $\Sigma$ such that, writing
  \[
    \bar X_n = \frac{1}{n}\sum_{i=1}^n X_{ni},
    \qquad
    \what\Sigma_n = \frac{1}{n}\sum_{i=1}^n
    (X_{ni} - \bar X_n)(X_{ni} - \bar X_n)^\top,
    \qquad
    Z_n = \sqrt{n}(\bar X_n - \theta_0),
  \]
  the following hold:
  \begin{enumerate}[label=(\roman*),leftmargin=2em]
  \item $Z_n \Rightarrow Z \sim \normal(0,\Sigma)$;
    \label{item:clt}
  \item $\what\Sigma_n \cp \Sigma$; \label{item:cov}
  \item $\max_{1 \le i \le n} \norms{X_{ni} - \bar X_n}/\sqrt{n} \cp 0$.
    \label{item:max}
  \end{enumerate}
\end{assumption}

Condition~\ref{item:max} is a Lindeberg-type negligibility requirement
that follows, for instance, from a uniform $(2+\delta)$-moment bound.
The assumption is standard in the double/debiased machine learning
literature and holds for augmented inverse-probability weighting scores
under mild overlap and rate conditions on the nuisance
estimators~\citep{ChernozhukovChDeDuHaNeRo18, AtheyWa21, KallusUe20}.

\begin{example}[Policy evaluation with doubly robust scores]
  \label{ex:aipw}
  Let $W$ denote covariates, a binary treatment assignment $A \in \{0,1\}$,
  and an observed outcome $Y \in \R$.  Write $Y^{(a)}$ for the potential
  outcome under treatment~$a$, and let $\Pi = \{\pi_1, \dots, \pi_J\}$ be a
  finite class of deterministic policies $\pi_j : \mathcal{W} \to \{0,1\}$.  The
  policy value is $V_{\pi}(P_0) = \E[Y^{(\pi(W))}]$, and the target of
  inference is the best policy value
  $\tau(P_0) = \max_{1 \le j \le J} V_{\pi_j}(P_0)$.

  Define the propensity score $e_0(w) = \P(A = 1 \mid W = w)$ and the
  conditional outcome functions
  $m_{0,a}(w) = \E[Y \mid W = w, A = a]$ for $a \in \{0,1\}$.  We
  impose the following standard conditions on the data-generating
  process and the nuisance estimators.
  \begin{enumerate}[label=(\alph*),leftmargin=2em]
  \item \emph{Unconfoundedness.}
    $(Y^{(0)}, Y^{(1)}) \perp A \mid W$.\label{item:unconf}
  \item \emph{Overlap.} There exists $\eta > 0$ such that
    $\eta \le e_0(w) \le 1 - \eta$ for $P_0$-almost every
    $w$.\label{item:overlap}
  \item \emph{Moment condition.} There exists $\delta > 0$ such that
    $\E[|Y|^{2+\delta}] < \infty$.\label{item:moment}
  \item \emph{Sample splitting.} The data are split into an auxiliary
    sample (used to fit nuisance estimators
    $\what{e}, \what{m}_0, \what{m}_1$) and an independent evaluation
    sample of size~$n$.  The evaluation-sample observations
    $(W_i, A_i, Y_i)_{i=1}^n$ are i.i.d.\ conditional on the
    auxiliary sample.\label{item:split}
  \item \emph{Rate condition.} The nuisance estimators satisfy the
    product rate
    \[
      \norms{\what{e} - e_0}_{L^2(P_0)}
      \cdot
      \max_{a \in \{0,1\}}
      \norms{\what{m}_a - m_{0,a}}_{L^2(P_0)}
      = o_p(n^{-1/2}),
    \]
    and each factor is $o_p(1)$.\label{item:rate}
  \end{enumerate}
  Conditions~\ref{item:unconf} and~\ref{item:overlap} are the
  standard identifying assumptions for causal inference from
  observational
  data~\citep{RosenbaumRu83, Imbens04}.
  Condition~\ref{item:rate} is the doubly robust rate requirement: it
  permits each nuisance function to converge slower than the
  $n^{-1/4}$ rate, so long as the product of the two rates is
  $o(n^{-1/2})$.  This accommodates modern nonparametric and
  machine-learning estimators; see~\citet{ChernozhukovChDeDuHaNeRo18}
  for sufficient conditions using cross-fitting, and~\citet{AtheyWa21}
  and~\citet{KallusUe20} for adaptations to the policy evaluation
  setting.

  Given these conditions, the evaluation-sample AIPW scores
  \[
    X_{ni,j}
    =
    \what{m}_{\pi_j(W_i)}(W_i)
    + \frac{\onevec\{A_i = \pi_j(W_i)\}}
    {\what{e}_{\pi_j(W_i)}(W_i)}
    \bigl(Y_i - \what{m}_{A_i}(W_i)\bigr),
    \qquad j = 1, \dots, J,
  \]
  satisfy Assumption~\ref{ass:score}.  The CLT
  (condition~\ref{item:clt}) and covariance consistency
  (condition~\ref{item:cov}) follow from the double robustness of the
  AIPW estimator: the product rate condition~\ref{item:rate} ensures
  that the bias from nuisance estimation is $o_p(n^{-1/2})$, so
  the scores behave asymptotically as if the true nuisance functions
  were known.  The Lindeberg condition~\ref{item:max} follows because the overlap
  bound~\ref{item:overlap} and the $(2+\delta)$-moment
  assumption~\ref{item:moment} imply
  $\E[|X_{ni,j}|^{2+\delta}] < \infty$ uniformly in~$j$,
  which is a standard sufficient condition for Lindeberg
  negligibility in the triangular-array setting.
\end{example}

\subsection{Geometric reduction}
\label{sec:reduction}

Starting with~\citet{Owen88}'s seminal work, the standard route to EL
asymptotics is by using the \emph{dual} reformulation: one writes the
score-profile statistic as a function of a Lagrange multiplier enforcing the
moment constraint, performs a stochastic Taylor expansion of the KKT
conditions around the true parameter, and inverts to obtain a $\chi^2$
limit~\citep{Owen01, HjortMcKe09}.  This dual route encounters two obstacles
for nonsmooth targets like $\tau = \max_j \theta_j$.  First, the profile over
the nonsmooth level set does not reduce to a single moment constraint with a
smooth Lagrangian, so the standard KKT expansion has no smooth analogue.
Second, even in the smooth case, a dual bootstrap proof requires a uniform
smallness condition on the resampled Lagrange multiplier---a conditional
empirical-process argument intertwined with the KKT expansion---that becomes
especially delicate when nuisance functions are refit inside each resample.  A
formal dual representation and a detailed comparison are given in
Appendix~\ref{sec:proofs-geometry}
(Proposition~\ref{prop:dual-representation}).

Instead, we take a direct \emph{primal} approach in this paper. We work
directly in the weight space of the empirical likelihood and show that the
profile statistic admits an exact geometric interpretation.  Specifically, it
equals the squared Mahalanobis distance from the score mean to the level set
of the target map.

For $t > 0$, define the \emph{local level set}
$L_t \defeq \{v \in \R^d : \phi(\theta_0 + tv) = \tau_0\}$. This is the set of
all rescaled perturbation directions along which the target functional retains
its null-hypothesized value. See Appendix~\ref{proof:exact-level-set} for the proof of the following result.
\begin{theorem}[Geometric reduction]
  \label{thm:exact-level-set}
  Under Assumption~\ref{ass:score},
  \[
    R_{n,2}(\tau_0) = d_{\what\Sigma_n}^2(Z_n, L_{1/\sqrt{n}}) + o_p(1),
  \]
  where
  $d_A^2(z,S) \defeq \inf_{v \in S} (z-v)^\top A^{-1}(z-v)$ for
  positive definite~$A$ and closed $S \subset \R^d$.
\end{theorem}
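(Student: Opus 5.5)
We reparametrize the weights by $q_i = (1+w_i)/n$, so that $\sum_i w_i = 0$, $w_i \ge -1$, and $\sum_i (nq_i-1)^2 = \norms{w}_2^2$. The reweighted mean becomes $\sum_i q_i X_{ni} = \bar X_n + n^{-1}\sum_i w_i (X_{ni}-\bar X_n)$. Setting $\theta_0 + v/\sqrt n$ equal to this mean, the constraint $\phi(\sum_i q_i X_{ni}) = \tau_0$ says exactly that $v \in L_{1/\sqrt n}$, where
\[
  v = Z_n + n^{-1/2}\textstyle\sum_i w_i (X_{ni}-\bar X_n).
\]
Let $M_n \in \R^{d\times n}$ have columns $(X_{ni}-\bar X_n)/\sqrt n$, so that $M_n M_n^\top = \what\Sigma_n$ and $M_n \mathbf{1} = 0$. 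Then
\[
  R_{n,2}(\tau_0) = \inf_{v \in L_{1/\sqrt n}} \; \inf_{w}\Bigl\{\norms{w}_2^2 : M_n w = v - Z_n,\ \textstyle\sum_i w_i = 0,\ w_i \ge -1\Bigr\}.
\]

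First we solve the inner problem without the positivity constraint $w_i \ge -1$. This is a minimum-norm solution of a linear system. Since the rows of $M_n$ are orthogonal to $\mathbf{1}$, the constraint $\sum_i w_i = 0$ is automatic for the minimizer $w^\star = M_n^\top \what\Sigma_n^{-1}(v - Z_n)$. This uses that $\what\Sigma_n$ is invertible with probability tending to one, by Assumption~\ref{ass:score}\ref{item:cov} and positive definiteness of $\Sigma$. The optimal value is $\norms{w^\star}_2^2 = (v-Z_n)^\top \what\Sigma_n^{-1}(v-Z_n)$. Minimizing over $v$ then gives the relaxed value exactly $d^2_{\what\Sigma_n}(Z_n, L_{1/\sqrt n})$. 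This is a lower bound for $R_{n,2}(\tau_0)$, because dropping a constraint can only decrease the infimum.

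The main obstacle is the matching upper bound: we must show the positivity constraint is asymptotically inactive at a near-optimal point. We handle it in three steps.
\begin{enumerate}[label=(\arabic*),leftmargin=2em]
  \item \emph{Tightness of the distance.} We show $d^2_{\what\Sigma_n}(Z_n, L_{1/\sqrt n}) = O_p(1)$. Since $\phi(\theta_0) = \tau_0$, we have $0 \in L_t$ for every $t$, so the distance is at most $Z_n^\top \what\Sigma_n^{-1} Z_n = O_p(1)$ by \ref{item:clt} and \ref{item:cov}.
  \item \emph{Choice of a near-minimizer.} We pick $v_n \in L_{1/\sqrt n}$ attaining the infimum up to $o_p(1)$. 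Then $\norms{v_n - Z_n} = O_p(1)$, using that the eigenvalues of $\what\Sigma_n$ are bounded above and below with probability tending to one.
  \item \emph{Positivity of the weights.} The corresponding $w^\star$ satisfies
  \[
    \max_i |w^\star_i| \le \max_i \frac{\norms{X_{ni}-\bar X_n}}{\sqrt n}\,\norms{\what\Sigma_n^{-1}}\,\norms{v_n - Z_n} = o_p(1)\cdot O_p(1)\cdot O_p(1),
  \]
  by Assumption~\ref{ass:score}\ref{item:max}. Hence $w^\star_i \ge -1$ for all $i$ with probability tending to one, so the induced $q$ is feasible.
\end{enumerate}
The feasible $q$ built in step (3) gives $R_{n,2}(\tau_0) \le d^2_{\what\Sigma_n}(Z_n, L_{1/\sqrt n}) + o_p(1)$, which combined with the lower bound proves the theorem. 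The equality is in fact exact on an event of probability tending to one, provided the infimum over $L_{1/\sqrt n}$ is attained, for instance when $\phi$ is continuous so that $L_t$ is closed. The care needed is only in handling the infimum and the possible nonexistence of a minimizer, which the $o_p(1)$ slack in step (2) absorbs.

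Notably, no structure of $\phi$ beyond $\phi(\theta_0) = \tau_0$ is used. All the nonsmoothness is pushed into the geometry of $L_{1/\sqrt n}$, which is what the subsequent level-set analysis exploits.
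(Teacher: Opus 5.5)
Your proof is correct and uses the same ingredients as the paper's: the min-norm tilting weights $w_i^\star = n^{-1/2}(X_{ni}-\bar X_n)^\top\what\Sigma_n^{-1}(v-Z_n)$, inactivity of the constraints $w_i\ge -1$ via Assumption~\ref{ass:score}\ref{item:max}, and localization through $0\in L_{1/\sqrt n}$. You package them as a sandwich (an exact relaxation lower bound plus one feasible near-minimizer) rather than the paper's uniform-on-$B_M$ Proposition~\ref{prop:pointwise} followed by localization to a deterministic ball. One small strengthening: equality with probability tending to one does not need the infimum to be attained, because on the event where $\what\Sigma_n$, $Z_n$ and $\max_i\norms{X_{ni}-\bar X_n}/\sqrt n$ are suitably controlled, every $v\in L_{1/\sqrt n}$ within slack $1$ of the infimum already gives feasible weights, so $R_{n,2}(\tau_0)\le d^2_{\what\Sigma_n}(Z_n,L_{1/\sqrt n})$ there.
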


To see why this is true, consider what the profile
statistic~\eqref{eq:euclidean-profile} is actually computing.  The
problem is to find probability weights $(q_1, \dots, q_n)$ that make the
weighted mean $\sum_i q_i X_{ni}$ satisfy the target constraint
$\phi(\cdot) = \tau_0$, while staying as close as possible to the
uniform weights $q_i = 1/n$ in the squared-deviation sense
$\sum_i (nq_i - 1)^2$.  Reparameterizing the constraint by writing
the weighted mean as $\theta_0 + n^{-1/2}v$ for some direction
$v \in L_{1/\sqrt{n}}$, the problem becomes: for each direction $v$ on
the level set, what is the cheapest way to tilt the empirical weights so
that the reweighted mean lands at $\theta_0 + n^{-1/2}v$?

This tilting problem is a minimum-norm least-squares problem.  With
$a_i = nq_i - 1$ as the perturbation from uniform and
$U_{ni} = X_{ni} - \bar X_n$ as the centered scores, the constraint
becomes $\sum_i a_i U_{ni} = \sqrt{n}(v - Z_n)$ with cost
$\sum_i a_i^2$.  The minimum-norm solution is
\begin{equation}
  a_{ni}(v) = \frac{1}{\sqrt{n}}\, U_{ni}^\top \what\Sigma_n^{-1}(v - Z_n),
  \label{eq:min-norm-weights}
\end{equation}
with total cost $(v - Z_n)^\top \what\Sigma_n^{-1}(v - Z_n)$---the
squared Mahalanobis distance from the score mean $Z_n$ to the point~$v$.
The key observation is that the positivity constraints $1 + a_i \ge 0$
are asymptotically inactive: the optimal perturbations
$a_{ni}(v)$ are of order $n^{-1/2}$ uniformly over bounded $v$, so the
weights $q_i = (1 + a_i)/n$ are strictly positive for all large~$n$
(the formal statement and proof are given in
Appendix~\ref{proof:pointwise},
Proposition~\ref{prop:pointwise}).

Minimizing the Mahalanobis cost over all directions $v$ on the level set
gives Theorem~\ref{thm:exact-level-set}.  The entire argument is
deterministic once the score array is given: it is a statement about the
structure of a convex program, not about the stochastic behavior of a
dual optimizer.  No Taylor expansion of an optimized Lagrangian is
needed, and no Gaussianity of the scores is invoked.  The randomness
enters only through $Z_n$ and $\what\Sigma_n$, and the geometric
question that remains is purely about the shape of the level set
$L_{1/\sqrt{n}}$.

\subsection{Level sets to limiting distributions}
\label{sec:levelset-theory}

By Theorem~\ref{thm:exact-level-set}, the profile statistic reduces to
the squared Mahalanobis distance from $Z_n$ to the level set
$L_{1/\sqrt{n}}$.  Since $Z_n \Rightarrow Z \sim \normal(0, \Sigma)$,
the limiting distribution of $R_{n,2}(\tau_0)$ is determined by the
first-order shape to which $L_{1/\sqrt{n}}$ converges as $n$ grows.

To make this concrete, return to the motivating example.  The max
functional $\tau = \max_j \theta_j$ applied to $J$ policy values has
level set
\[
  L_t = \{v \in \R^J : \max_{1 \le j \le J}(\theta_{0j} + tv_j) = \tau_0\}.
\]
This is the set of all local perturbation directions~$v$ that preserve
the best policy value at~$\tau_0$.  For small~$t$, the suboptimal
policies---those with $\theta_{0k}$ strictly below
$\tau_0$---are irrelevant: their perturbed values
$\theta_{0k} + tv_k$ remain below $\tau_0$ as long as $v$ stays
bounded.  The constraint therefore involves only the
\emph{optimal} policies $J_0 = \arg\max_j \theta_{0j}$.

Two cases arise.  If a single policy is uniquely
optimal ($|J_0| = 1$, say $J_0 = \{j_0\}$), then the constraint reduces to
$v_{j_0} = 0$: a \emph{hyperplane} in~$\R^J$.  As we will show in subsequent
sections, projecting $Z_n$ onto a hyperplane produces a scalar ratio that
converges to $\chi_1^2$---the familiar Wilks phenomenon.  If two or more
policies are tied for optimality ($|J_0| > 1$), the constraint becomes
$\max_{j \in J_0} v_j = 0$: a \emph{cone} whose faces correspond to the
different tie-breaking patterns.  Projecting $Z_n$ onto a cone yields a
non-$\chi^2$ limit that depends on the geometry of the cone and the
covariance~$\Sigma$.

In both cases, $L_t$ converges to a well-defined limit set~$C$ as
$t \downarrow 0$, and the nature of $C$---hyperplane or
cone---determines the asymptotic distribution.  The practitioner does
not need to know which case applies: whether $|J_0|=1$ or
$|J_0|>1$ is itself an inferential question that is at least as hard as
the confidence bound being sought, so any method that requires assuming
uniqueness a priori is circular.  Our framework handles both geometries
with a single procedure by estimating $C$ from the data
(Section~\ref{sec:nonregular}).  The following assumption formalizes this
convergence for a general target functional.

\begin{assumption}[Level-set convergence]
  \label{ass:levelset}
  There exists a closed set $C \subset \R^d$ such that for every fixed
  $M < \infty$,
  $d_H(L_t \cap B_M,\, C \cap B_M) \to 0$ as $t \downarrow 0$, where
  $d_H$ denotes Hausdorff distance and
  $B_M = \{v : \norms{v} \le M\}$.
\end{assumption}

Assumption~\ref{ass:levelset} is the only structural condition the
subsequent theory requires of the target functional~$\phi$.  In
particular, the geometric reduction
(Theorem~\ref{thm:exact-level-set}), the asymptotic distribution
(Theorem~\ref{thm:general-set-limit}), and both bootstrap procedures
(Theorems~\ref{thm:hd-boot} and~\ref{thm:hdd-boot}) hold for any
$\phi$ satisfying this assumption; we develop the max functional in
detail because it is the leading case in policy evaluation and its
combinatorial structure yields closed-form profile bounds.

The role of the truncation at radius~$M$ is technical: it ensures that
convergence is stated on compacts, which suffices because the minimizing
projection in Theorem~\ref{thm:exact-level-set} lies in a bounded
region with high probability.  For the max functional, the informal
argument above can be made precise.

\begin{proposition}[Level-set structure of the max functional]
  \label{prop:max-levelset}
  Let $\phi(\theta) = \max_{1 \le j \le J} \theta_j$. If
  $\Delta_0 \defeq \min_{k \notin J_0}(\tau_0 - \theta_{0k}) > 0$
  (with $\Delta_0 = +\infty$ when $J_0 = \{1,\dots,J\}$), then for
  every fixed $M < \infty$ and every $0 < t < \Delta_0/M$,
  \[
    L_t \cap B_M = \{v \in B_M : \max_{j \in J_0} v_j = 0\}.
  \]
  Consequently:
  \begin{enumerate}[label=(\roman*),leftmargin=2em]
  \item If $|J_0| = 1$, Assumption~\ref{ass:levelset} holds with
    $C = H = \{v : v_{j_0} = 0\}$, a hyperplane.
  \item If $|J_0| > 1$, Assumption~\ref{ass:levelset} holds with
    $C = \{v : \max_{j \in J_0} v_j = 0\}$, a cone.
  \end{enumerate}
\end{proposition}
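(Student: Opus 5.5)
The plan is to prove the set identity directly from the definition of $L_t$. The two consequences then follow because, for small $t$, the sets $L_t \cap B_M$ and $C \cap B_M$ coincide exactly. Fix $M<\infty$ and $0<t<\Delta_0/M$, and take $v\in B_M$. Since $\tau_0=\max_j\theta_{0j}$, we have $\theta_{0j}=\tau_0$ for $j\in J_0$, so
\[
  \max_{j\in J_0}(\theta_{0j}+tv_j)=\tau_0+t\max_{j\in J_0}v_j .
\]
For $k\notin J_0$ we have $|v_k|\le\norms{v}\le M$, and therefore
\[
  \theta_{0k}+tv_k\le \tau_0-\Delta_0+tM<\tau_0 .
\]
Hence
\[
  \phi(\theta_0+tv)=\max\Bigl\{\tau_0+t\max_{j\in J_0}v_j,\ \max_{k\notin J_0}(\theta_{0k}+tv_k)\Bigr\},
\]
where the second term is strictly below $\tau_0$. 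When $J_0$ is the full index set, the second term is absent, and the condition $t<\Delta_0/M=\infty$ is vacuous.

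We verify both inclusions. Suppose first that $\max_{j\in J_0}v_j=0$. Then the first term equals $\tau_0$ and dominates, so $\phi(\theta_0+tv)=\tau_0$ and $v\in L_t$. Conversely, suppose $\phi(\theta_0+tv)=\tau_0$. The second term is $<\tau_0$, so the maximum must be attained by the first term, giving $\tau_0+t\max_{J_0}v_j=\tau_0$. Since $t>0$, this means $\max_{J_0}v_j=0$. This establishes $L_t\cap B_M=\{v\in B_M:\max_{j\in J_0}v_j=0\}$.

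For the consequences, set $C=\{v:\max_{j\in J_0}v_j=0\}$. This set is closed because $v\mapsto\max_{j\in J_0}v_j$ is continuous. By the identity just proved, $L_t\cap B_M=C\cap B_M$ for all $t<\Delta_0/M$. The Hausdorff distance is therefore identically zero for small $t$, and Assumption~\ref{ass:levelset} holds. If $J_0=\{j_0\}$, then $C=\{v_{j_0}=0\}=H$, a hyperplane. If $|J_0|>1$, then $C$ is invariant under positive scaling, since $\max_{J_0}(sv_j)=s\max_{J_0}v_j$ for $s>0$, so $C$ is a closed cone. It is not convex in general, being a union of faces $\{v_j=0,\ v_i\le0\ \forall i\in J_0\}$.

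The only delicate step is the strictness of the bound on the non-optimal coordinates. We need $tM<\Delta_0$ strictly, and we should use $|v_k|\le\norms{v}$, which holds for the Euclidean norm or any norm dominating the sup norm. Everything else is bookkeeping.
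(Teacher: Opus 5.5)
Your proof is correct and takes the same route as the paper's. You split $\phi(\theta_0+tv)$ into the $J_0$ block, which equals $\tau_0+t\max_{j\in J_0}v_j$, and the suboptimal block, which stays strictly below $\tau_0$ because $tM<\Delta_0$, then verify both inclusions; your handling of the consequences (closedness of $C$, zero Hausdorff distance for small $t$, hyperplane versus cone) fills in what the paper calls immediate.
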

\noindent\emph{Proof.} See Appendix~\ref{proof:max-levelset}.\medskip

Assumption~\ref{ass:levelset} paired with
Theorem~\ref{thm:exact-level-set} now reduces the calibration problem
to understanding $d_\Sigma^2(Z, C)$ for the appropriate limit set~$C$.
See Appendix~\ref{proof:general-set-limit} for a formal derivation.
\begin{theorem}[Distance-to-set limit]
  \label{thm:general-set-limit}
  Under Assumptions~\ref{ass:score} and~\ref{ass:levelset},
  \[
    R_{n,2}(\tau_0) = d_{\what\Sigma_n}^2(Z_n, C) + o_p(1)
    \;\Rightarrow\;
    d_\Sigma^2(Z, C),
    \qquad Z \sim \normal(0, \Sigma).
  \]
\end{theorem}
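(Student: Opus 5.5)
Starting from Theorem~\ref{thm:exact-level-set}, which gives $R_{n,2}(\tau_0) = d_{\what\Sigma_n}^2(Z_n, L_{1/\sqrt n}) + o_p(1)$, the proof has two parts. First, we replace $L_{1/\sqrt n}$ by $C$ at cost $o_p(1)$ using Assumption~\ref{ass:levelset}. Second, we pass to the limit with the continuous mapping theorem, applied jointly to $(Z_n,\what\Sigma_n) \Rightarrow (Z,\Sigma)$; this joint convergence holds by Slutsky, since $\what\Sigma_n \cp \Sigma$ is a constant.

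For the set replacement, we first localize. Since $\what\Sigma_n \cp \Sigma \succ 0$, with probability tending to one the eigenvalues of $\what\Sigma_n$ lie in $[\lambda/2, 2\Lambda]$, where $\lambda,\Lambda$ are the extreme eigenvalues of $\Sigma$. We need both sets to come close to the origin, with $0 \in C$ and $0 \in L_t$ for every $t$; this is what we need, and it holds because $\phi(\theta_0) = \tau_0$ gives $0 \in L_t$, and Hausdorff convergence on $B_M$ then forces $0 \in C$, as $C$ is closed. Hence
\[
  d_{\what\Sigma_n}^2(Z_n, S) \le (2/\lambda)\norms{Z_n}^2 \qquad \text{for } S \in \{L_{1/\sqrt n}, C\}.
\]
Any near-minimizer $v$ therefore satisfies $\norms{v - Z_n} \le \sqrt{2\Lambda \cdot 2/\lambda}\,\norms{Z_n} + 1$. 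Because $Z_n = O_p(1)$ by tightness from Assumption~\ref{ass:score}\ref{item:clt}, for each $\epsilon>0$ there is $M$ such that, with probability at least $1-\epsilon$, both infima are attained within $B_M$ (up to slack), and $\norms{Z_n} \le M/2$.

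On this event, write $\delta_n = d_H(L_{1/\sqrt n}\cap B_{M}, C\cap B_{M}) \to 0$. The map $v \mapsto (z-v)^\top A^{-1}(z-v)$ is Lipschitz on $B_M$, uniformly over $\norms{z}\le M$ and $A^{-1}$ with operator norm at most $2/\lambda$, with constant $K = 4M\cdot 2/\lambda$. So every near-minimizer over one truncated set has a neighbor in the other within $\delta_n$, which gives
\[
  \bigl|d_{\what\Sigma_n}^2(Z_n, L_{1/\sqrt n}\cap B_M) - d_{\what\Sigma_n}^2(Z_n, C\cap B_M)\bigr| \le K\delta_n .
\]
The main subtlety is that the neighbor may fall just outside the region where the minimizer lives. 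To handle this, take the truncation radius $2M$ when applying Hausdorff convergence, while localizing to $B_M$. Since $\epsilon$ is arbitrary, this yields the $o_p(1)$ replacement, and hence the first equality in the statement.

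For the weak limit, define $g(z,A) = d_A^2(z,C)$ on $\R^d \times \{A \succ 0\}$. This function is continuous. For fixed $A$, $z\mapsto d_A(z,C)$ is $\|A^{-1/2}\|$-Lipschitz. In $A$, $g$ is an infimum of functions continuous in $A^{-1}$, and the infimum can be restricted to a compact ball by the same localization, so joint continuity follows from uniform convergence on compacts. The continuous mapping theorem then gives $d_{\what\Sigma_n}^2(Z_n,C) \Rightarrow d_\Sigma^2(Z,C)$, and Slutsky absorbs the $o_p(1)$ terms. I expect the main obstacle to be the localization step, namely verifying that near-minimizers stay in a fixed compact with high probability. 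This relies on $0\in L_t\cap C$ together with the eigenvalue control of $\what\Sigma_n$.
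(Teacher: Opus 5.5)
Your proposal is correct and follows essentially the same route as the paper: start from Theorem~\ref{thm:exact-level-set}, localize near-minimizers to a deterministic ball, and use the Lipschitz bound on $v \mapsto (z-v)^\top A^{-1}(z-v)$ with the Hausdorff convergence in Assumption~\ref{ass:levelset} to swap $L_{1/\sqrt n}$ for $C$. Then apply the continuous mapping theorem to $(z,A)\mapsto d_A^2(z,C)$. Your explicit check that $0\in C$, which localizes the projection onto $C$ and not only onto $L_{1/\sqrt n}$, and your justification of joint continuity in $(z,A)$ make precise two points the paper's proof states only briefly.
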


Theorem~\ref{thm:general-set-limit} reduces the problem of calibrating
the profile statistic to identifying the first-order limit set~$C$.
When $C$ is a hyperplane, the distance $d_\Sigma^2(Z, C)$ is a scalar
quadratic form and the limit is $\chi_1^2$---the standard Wilks
calibration (Section~\ref{sec:regular}).  When $C$ is a cone, however,
the calibration problem is harder.  The distance
$d_\Sigma^2(Z, C)$ is the minimum of a quadratic form over the faces
of $C$, each face corresponding to a different subset of the tied
policies that could attain the maximum.  The resulting limit is a
weighted mixture of $\chi^2$ distributions ($\bar\chi^2$) whose mixing
probabilities depend on the unknown cone $C$ and the covariance
$\Sigma$---a non-pivotal distribution that cannot be tabulated.
Applying the standard $\chi_1^2$ critical value in this setting amounts
to projecting onto a single face (the tangent hyperplane), ignoring all
other faces of the cone, and systematically underestimates the true
critical value.  Section~\ref{sec:nonregular} develops a corrected
bootstrap that estimates the cone from the data and produces the
correct critical value without requiring the practitioner to know $J_0$.

\section{Smooth Statistical Functionals}
\label{sec:regular}

Before turning to the main results of this paper---bootstrap calibration for
nonsmooth functionals---we develop the theory for smooth targets.  The smooth
case serves an expository and illustrative purpose: it shows that the
geometric framework recovers classical results (the Wilks $\chi_1^2$ limit and
ordinary bootstrap consistency) as special cases, and it introduces the
closed-form profile bounds and the standard bootstrap approach.  Existing EL
results for the max functional~\citep{DuchiGlNa21} assume $|J_0|=1$
throughout, which, as we argued in Section~\ref{sec:intro}, is at least as
hard to verify as the inferential task itself.  Our framework will ultimately
remove this requirement (Section~\ref{sec:nonregular}), but the smooth case is
the natural starting point because it isolates the geometric mechanism.

When the target functional $\phi$ is Hadamard differentiable at $\theta_0$
with nonzero gradient $a_0 = \nabla\phi(\theta_0)$, the local level set
converges to a hyperplane $H = \{v : a_0^\top v = 0\}$.  This is the smooth
case: the profile statistic converges to $\chi_1^2$, recovering the classical
Wilks phenomenon, and an ordinary score bootstrap is consistent.  We now
develop these results and instantiate them for the policy evaluation problem.
Recall that $J_0 = \arg\max_j \theta_{0j}$ denotes the set of optimal
policies.  By Proposition~\ref{prop:max-levelset}(i), when a single policy is
uniquely optimal ($|J_0| = 1$, say $J_0 = \{j_0\}$) the level set is the
hyperplane $H = \{v : v_{j_0} = 0\}$.  We formalize this as an assumption on
the general target functional.

\begin{assumption}[Hyperplane regularity]
  \label{ass:hd}
  Assumption~\ref{ass:levelset} holds with $C = H$ for a hyperplane
  $H = \{v : a_0^\top v = 0\}$ with $a_0 \neq 0$.
\end{assumption}

\begin{corollary}[Smooth limit]
  \label{cor:hd}
  Under Assumptions~\ref{ass:score} and~\ref{ass:hd},
  \[
    R_{n,2}(\tau_0)
    =
    \frac{(a_0^\top Z_n)^2}{a_0^\top \what\Sigma_n\, a_0} + o_p(1)
    \;\Rightarrow\; \chi_1^2.
  \]
\end{corollary}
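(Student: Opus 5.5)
The plan is to apply Theorem~\ref{thm:general-set-limit} with $C = H$, which Assumption~\ref{ass:hd} permits. That gives
\[
  R_{n,2}(\tau_0) = d^2_{\what\Sigma_n}(Z_n, H) + o_p(1).
\]
What remains is to compute the Mahalanobis distance to a hyperplane in closed form and then pass to the limit with the continuous mapping theorem.

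\textbf{Closed-form projection.} Fix a positive definite matrix $A$. The task is to minimize $(z-v)^\top A^{-1}(z-v)$ subject to $a_0^\top v = 0$. This is an equality-constrained strictly convex quadratic program, so a single Lagrange multiplier settles it.

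The stationarity condition is $A^{-1}(v-z) = \lambda a_0$, so $v = z + \lambda A a_0$. Imposing $a_0^\top v = 0$ gives $\lambda = -a_0^\top z/(a_0^\top A a_0)$. Substituting back, the optimal value is
\[
  \lambda^2\, a_0^\top A a_0 = \frac{(a_0^\top z)^2}{a_0^\top A a_0}.
\]
The denominator is strictly positive because $a_0 \neq 0$ and $A \succ 0$. Because $\what\Sigma_n \cp \Sigma \succ 0$, the matrix $\what\Sigma_n$ is positive definite with probability tending to one, and on that event we take $A = \what\Sigma_n$. This yields the displayed $o_p(1)$ representation.

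\textbf{Weak limit.} The map $g(z,A) = (a_0^\top z)^2/(a_0^\top A a_0)$ is continuous at every point $(z,\Sigma)$ with $\Sigma \succ 0$. By Assumption~\ref{ass:score}\ref{item:clt}--\ref{item:cov} and Slutsky's lemma, $(Z_n,\what\Sigma_n) \Rightarrow (Z,\Sigma)$ jointly, since the second coordinate converges to a constant. The continuous mapping theorem then gives
\[
  g(Z_n,\what\Sigma_n) \Rightarrow \frac{(a_0^\top Z)^2}{a_0^\top\Sigma a_0}.
\]
Here $a_0^\top Z \sim \normal(0, a_0^\top\Sigma a_0)$, so the limit is the square of a standard normal, that is, $\chi_1^2$. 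Slutsky's lemma absorbs the $o_p(1)$ remainder.

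\textbf{Main obstacle.} The main difficulty would be justifying the replacement of $L_{1/\sqrt n}$ by $H$, but Theorem~\ref{thm:general-set-limit} already handles it. The only care needed here is the positivity of $a_0^\top\what\Sigma_n a_0$, and this holds with probability tending to one by covariance consistency. The rest is routine.
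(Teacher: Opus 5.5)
Your proposal is correct and follows the paper's proof essentially step for step. Both invoke Theorem~\ref{thm:general-set-limit} with $C=H$, compute $d_A^2(z,H)=(a_0^\top z)^2/(a_0^\top A a_0)$ via the explicit $A^{-1}$-projection $z - Aa_0(a_0^\top A a_0)^{-1}a_0^\top z$ (which your Lagrange-multiplier derivation reproduces), and conclude by Slutsky and continuous mapping; your remark that $a_0^\top\what\Sigma_n a_0>0$ with probability tending to one is a harmless refinement the paper leaves implicit.
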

\noindent\emph{Proof.} See Appendix~\ref{proof:hd}.\medskip

The squared Mahalanobis distance from a Gaussian vector to a
hyperplane is a scalar ratio $\chi_1^2$---this is simply the statement
that the signed distance from $Z_n$ to $H$, normalized by the
appropriate variance, is asymptotically standard normal.  For the max
functional with a unique optimum at policy~$j_0$, the profile statistic
reduces to $n(\bar X_{n,j_0} - \tau_0)^2 / \what\Sigma_{n,j_0 j_0}
+ o_p(1)$: the squared $t$-statistic for the best policy's value.  The
EL machinery produces the familiar Wald-type test as a special case, but
with the important difference that the critical value has been derived
from the profile geometry rather than from a variance estimation.

\subsection{Closed-form profile bounds}

Because the max functional has combinatorial structure, the profile
statistic can be expressed in closed form.  The closed-form expressions
show how the profile statistic depends on the policy-value gaps and
covariance structure, and they serve as the foundation for the simplex
lower-bound formula used in both the smooth and nonsmooth cases.

The max functional's combinatorial structure allows the geometric
objects in Theorem~\ref{thm:exact-level-set} to be computed in closed
form.  By Theorem~\ref{thm:exact-level-set}, the Euclidean profile
statistic for the max target is
\begin{equation}
  R_{n,2}^{\max}(\tau)
  = n\, d_{\what\Sigma_n}^2(\bar X_n, M_\tau) + o_p(1),
  \label{eq:max-profile-stat}
\end{equation}
where $M_\tau = \{m \in \R^J : \max_j m_j = \tau\}$.
For the decision-maker's primary object of interest---a lower confidence
bound on the best policy value---the profile statistic admits a
particularly clean dual form.

\begin{proposition}[Lower bound via simplex optimization]
  \label{prop:max-lower}
  For a critical value $c > 0$, define the acceptance ellipsoid
  $E_n(c) = \{m \in \R^J : n(m - \bar X_n)^\top \what\Sigma_n^{-1}
  (m - \bar X_n) \le c\}$.  The lower endpoint
  $L_n(c) \defeq \inf_{m \in E_n(c)} \max_j m_j$ satisfies
  \[
    L_n(c)
    = \max_{w \in \Delta_J}
    \left\{
      w^\top \bar X_n
      - \sqrt{\frac{c}{n}}\,\sqrt{w^\top \what\Sigma_n\, w}
    \right\},
    \qquad
    \Delta_J = \{w \ge 0,\; w^\top \onevec = 1\}.
  \]
\end{proposition}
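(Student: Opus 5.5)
Write $\max_j m_j = \max_{w\in\Delta_J} w^\top m$, so that
\[
  L_n(c) = \inf_{m\in E_n(c)} \max_{w\in\Delta_J} w^\top m .
\]
The plan is to swap the inf and the max by a minimax theorem and then evaluate the inner problem in closed form.

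\textbf{Step 1 (minimax).} The function $(m,w)\mapsto w^\top m$ is bilinear. Both $E_n(c)$ and $\Delta_J$ are convex and compact. For $E_n(c)$ this uses the positive definiteness of $\what\Sigma_n$, which holds whenever $\what\Sigma_n$ is invertible; this invertibility is implicit in the definition of $E_n(c)$, and asymptotically it follows from Assumption~\ref{ass:score}\ref{item:cov}. Sion's (or von Neumann's) minimax theorem then gives
\[
  L_n(c) = \max_{w\in\Delta_J}\ \inf_{m\in E_n(c)} w^\top m .
\]
The outer maximum is attained because the inner value is continuous (indeed concave) in $w$ and $\Delta_J$ is compact.

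\textbf{Step 2 (support function of the ellipsoid).} Parametrize the ellipsoid as
\[
  m = \bar X_n + \sqrt{c/n}\,\what\Sigma_n^{1/2} u, \qquad \norms{u}\le 1 .
\]
Then
\[
  \inf_{m\in E_n(c)} w^\top m = w^\top\bar X_n + \sqrt{c/n}\,\inf_{\norms{u}\le 1} \bigl(\what\Sigma_n^{1/2}w\bigr)^\top u .
\]
By Cauchy--Schwarz the infimum over $u$ equals $-\norms{\what\Sigma_n^{1/2}w} = -\sqrt{w^\top\what\Sigma_n w}$. Substituting this back yields the stated formula.

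The only step with real content is the minimax interchange in Step~1. I would check explicitly that the compactness and convexity hypotheses hold, in particular that $E_n(c)$ is bounded, which requires $\what\Sigma_n\succ 0$. If $\what\Sigma_n$ were singular, $E_n(c)$ would have to be interpreted as a degenerate ellipsoid. The argument would still go through with $\what\Sigma_n^{1/2}$ replaced by any square root, since the parametrization in Step~2 continues to describe the set.

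Alternatively, one can bypass a general minimax theorem. The inequality $\ge$ is weak duality and is immediate. For $\le$, take a minimizer $m^\star$ of the convex program $\min\{t: m_j\le t\ \forall j,\ m\in E_n(c)\}$. Its KKT multipliers on the constraints $m_j\le t$ form a vector $w^\star\in\Delta_J$, and stationarity shows that $m^\star$ minimizes $w^{\star\top}m$ over $E_n(c)$. Slater's condition holds for this program, so the duality gap is zero.
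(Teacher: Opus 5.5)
Your proposal is correct and follows the paper's proof essentially step for step. Both rewrite $\max_j m_j$ as $\max_{w\in\Delta_J} w^\top m$, exchange the infimum and maximum via Sion's theorem using compactness and convexity of $E_n(c)$ and $\Delta_J$, and evaluate the inner problem as the support function of the ellipsoid. Your Cauchy--Schwarz computation of that support function, and your remark that boundedness of $E_n(c)$ requires $\what\Sigma_n \succ 0$, just make explicit details the paper leaves implicit.
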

\noindent This result is known~\cite{DuchiGlNa21}, and we rederive it for completeness
 in Appendix~\ref{proof:max-lower}.
The simplex representation is computationally attractive: after the
score mean and covariance have been computed, the lower bound is obtained
by solving a $J$-dimensional convex program rather than manipulating
$n$ empirical-likelihood weights.  The optimal weight vector~$w^*$ is
concentrated on the near-optimal policies, so in practice the program
effectively reduces to a problem of dimension~$|J_0|$.

The leading-order cost of the projected-joint workaround discussed in
Section~\ref{sec:intro} can also be made precise.  Recall that
$\Delta_0 = \min_{k \notin J_0}(\tau_0 - \theta_{0k})$ is the gap
between the best policy value and the second-best.

\begin{proposition}[Inflation factor]
  \label{prop:unique-inflation}
  Suppose $|J_0| = 1$ with $\Delta_0 > 0$.
  Let $c_{1,\alpha} = \chi^2_{1,1-\alpha}$ and
  $c_{J,\alpha} = \chi^2_{J,1-\alpha}$.  Then the direct profile lower
  bound satisfies
  \[
    L_{n,\mathrm{dir}}(1-\alpha)
    = \bar X_{n,j_0}
    - \sqrt{\frac{c_{1,\alpha}}{n}}\,\sqrt{\what\Sigma_{n,j_0 j_0}}
    + o_p(n^{-1/2}),
  \]
  while the projected joint lower bound replaces $c_{1,\alpha}$ with
  $c_{J,\alpha}$.  The radius ratio converges to
  $\sqrt{c_{J,\alpha}/c_{1,\alpha}}$.
\end{proposition}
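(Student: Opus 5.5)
We prove Proposition~\ref{prop:unique-inflation} by starting from the simplex representation of Proposition~\ref{prop:max-lower}. The direct profile lower bound is $L_{n,\mathrm{dir}}(1-\alpha)=L_n(c_{1,\alpha})$, which inverts the direct profile statistic at the $\chi^2_1$ critical value justified by Corollary~\ref{cor:hd}. The projected joint bound is $L_n(c_{J,\alpha})$, the infimum of $\max_j m_j$ over the joint $\chi^2_J$ ellipsoid. In both cases we must analyze
\[
  G_n(w) \defeq w^\top \bar X_n - \sqrt{c/n}\,\sqrt{w^\top\what\Sigma_n w},
  \qquad w \in \Delta_J,
\]
for a fixed $c>0$, and show that its maximum equals $G_n(e_{j_0}) + o_p(n^{-1/2})$. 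Evaluating at the vertex gives $G_n(e_{j_0}) = \bar X_{n,j_0}-\sqrt{c/n}\,\sqrt{\what\Sigma_{n,j_0j_0}}$. That yields both displayed expansions, with $c=c_{1,\alpha}$ and $c=c_{J,\alpha}$ respectively.

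The lower bound $\max_w G_n \ge G_n(e_{j_0})$ is trivial. For the upper bound we split $\Delta_J$ into two regions.

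\textbf{Weights far from the vertex.} Write $s=\sum_{k\neq j_0} w_k$. By Assumption~\ref{ass:score}, $\bar X_n = \theta_0 + O_p(n^{-1/2})$ and $\what\Sigma_n \to \Sigma \succ 0$. Hence uniformly in $w$,
\[
  w^\top \bar X_n \le \tau_0 - s\Delta_0 + O_p(n^{-1/2}).
\]
The penalty term is $O_p(n^{-1/2})$ uniformly, and its magnitude is bounded below by a multiple of $n^{-1/2}$ because $\lambda_{\min}(\what\Sigma_n)$ is bounded away from $0$ with high probability. Therefore, on $\{s \ge K n^{-1/2}\}$ with $K$ large, $G_n(w) < G_n(e_{j_0})$ with probability tending to one. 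So it suffices to maximize over $s \le K n^{-1/2}$.

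\textbf{Weights near the vertex.} On $\{s \le K n^{-1/2}\}$ we have $\|w-e_{j_0}\| = O(n^{-1/2})$. Since $w\mapsto\sqrt{w^\top\what\Sigma_n w}$ is Lipschitz with constant $O_p(1)$, the penalty changes by only $\sqrt{c/n}\cdot O(n^{-1/2}) = O_p(n^{-1})$. Meanwhile
\[
  w^\top\bar X_n - \bar X_{n,j_0} = \sum_{k\ne j_0} w_k(\bar X_{n,k}-\bar X_{n,j_0}) \le 0
\]
with probability tending to one, since each difference converges to $\theta_{0k}-\tau_0 \le -\Delta_0 <0$. Hence $\sup_{s\le Kn^{-1/2}} G_n(w) \le G_n(e_{j_0}) + O_p(n^{-1})$. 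This establishes both expansions.

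\textbf{Radius ratio.} Since $\what\Sigma_{n,j_0j_0}\to\Sigma_{j_0j_0}>0$, the radii $\sqrt n(\bar X_{n,j_0}-L_n)$ equal $\sqrt{c}\,\sqrt{\what\Sigma_{n,j_0j_0}} + o_p(1)$. Their ratio therefore converges in probability to $\sqrt{c_{J,\alpha}/c_{1,\alpha}}$.

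The main obstacle is the first localization step. We need a margin argument that is uniform over the simplex and shows the penalty term cannot offset the $s\Delta_0$ loss. This requires a careful choice of $K$ relative to the $O_p(n^{-1/2})$ fluctuations of $\bar X_n$ and the boundedness of $\what\Sigma_n$. In fact a cruder but simpler route suffices: on $\{\bar X_{n,k} < \bar X_{n,j_0}-\Delta_0/2\ \forall k\ne j_0\}$, compare $G_n(w)$ directly to $G_n(e_{j_0})$ using the Lipschitz bound $|\sqrt{w^\top\what\Sigma_n w}-\sqrt{\what\Sigma_{n,j_0j_0}}|\le C s$. This gives
\[
  G_n(w)-G_n(e_{j_0}) \le -s\Delta_0/2 + \sqrt{c/n}\,C s \le 0
\]
for all $w$ once $n$ is large. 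So the maximizer is exactly $e_{j_0}$ with probability tending to one, and the remainder is actually zero on that event. We will use this cleaner argument.
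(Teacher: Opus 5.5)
Your proof is correct, but it is organized differently from the paper's. The paper treats the two bounds asymmetrically. For the direct bound it stays in the primal level-set picture: it uses Propositions~\ref{prop:max-levelset} and~\ref{prop:max-face} to reduce $R_{n,2}^{\max}(\tau)$ to the single-face quadratic $n(\bar X_{n,j_0}-\tau)^2/\what\Sigma_{n,j_0j_0}+o_p(1)$, uniformly over $|\tau-\tau_0|\le Mn^{-1/2}$, and then solves for the crossing with $c_{1,\alpha}$. Only for the joint bound does it pass to the simplex formula of Proposition~\ref{prop:max-lower}, and there it simply asserts that the optimizer is $e_{j_0}+o_p(1)$.

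You instead run both bounds through the simplex formula and prove one margin inequality, $G_n(w)-G_n(e_{j_0})\le s\bigl(-\Delta_0/2+C\sqrt{c/n}\bigr)$ with $C=\sqrt{2\|\what\Sigma_n\|_{\mathrm{op}}}$. This shows that $e_{j_0}$ is the exact maximizer for both critical values on a single event of probability tending to one. It gives more than the statement asks. The remainders are identically zero on that event, and the radius ratio equals $\sqrt{c_{J,\alpha}/c_{1,\alpha}}$ exactly there. Your inequality also supplies precisely the justification that the paper's ``$w=e_{j_0}+o_p(1)$'' leaves implicit.

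What the paper's route buys is that it inverts the profile statistic itself. You take $L_{n,\mathrm{dir}}(1-\alpha)=L_n(c_{1,\alpha})$ for granted, so you should add the line that justifies it. By~\eqref{eq:max-profile-stat}, $n\,d^2_{\what\Sigma_n}(\bar X_n,M_\tau)\le c$ holds if and only if $M_\tau\cap E_n(c)\neq\emptyset$. The map $m\mapsto\max_j m_j$ sends the compact convex set $E_n(c)$ onto an interval whose left endpoint is $L_n(c)$. If you use the positivity-constrained statistic instead, note that it always dominates its Mahalanobis form. By Proposition~\ref{prop:pointwise} it coincides with that form on this interval with probability tending to one, so the two inverted bounds agree with probability tending to one.

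Your first two-region localization sketch is superseded by the final comparison argument and can be dropped.
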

See Appendix~\ref{proof:unique-inflation} for the derivation. At the $95\%$
level with $J = 20$, this ratio is approximately $2.86$: the projected joint
method produces a bound nearly three times wider than necessary, paying a
$J$-dimensional penalty for a scalar target.  The gap grows as $\sqrt{J}$ and
becomes severe in large policy classes.

\subsection{Ordinary score bootstrap}

The $\chi_1^2$ calibration from Corollary~\ref{cor:hd} is valid when
the level set is exactly a hyperplane, but the $\chi_1^2$ critical
value may be slightly miscalibrated in finite samples.  Bootstrap
calibration can improve the critical value, and the primal reduction
makes this especially simple: because the profile statistic operates
entirely at the score level, the bootstrap can be applied to the
frozen score vectors with no nuisance refitting.

Let $X_{n1}^*, \dots, X_{nn}^*$ be an Efron bootstrap resample from the
empirical distribution of $\{X_{ni}\}$, and set
$Z_n^* = \sqrt{n}(\bar X_n^* - \bar X_n)$.

\begin{assumption}[Score bootstrap validity]
  \label{ass:hdboot}
  Conditionally on the data,
  $\sup_{g \in \mathrm{BL}_1(\R^d)}
  |\E^*[g(Z_n^*)] - \E[g(Z)]| \cp 0$ for
  $Z \sim \normal(0, \Sigma)$.  Moreover there exists a random
  hyperplane $\what H_n$ with
  $d_H(\what H_n \cap B_M, H \cap B_M) \cp 0$ for every fixed~$M$.
\end{assumption}

In the finite-policy setting with a unique optimum,
$\what H_n = \{v : v_{\what j} = 0\}$ where
$\what j = \arg\max_j \bar X_{n,j}$.  Assumption~\ref{ass:hdboot} then
requires only that $\what j = j_0$ eventually---i.e., that the
sample correctly identifies the best policy---which holds with
probability tending to one whenever $\Delta_0 > 0$.

We write $\mathcal L(X)$ for the law (probability distribution) of a
random variable~$X$, $\mathcal L^*(X \mid \text{data})$ for the
conditional law under the bootstrap measure, and
$\Rightarrow_p$ for convergence in distribution in outer
probability.

\begin{theorem}[Ordinary score bootstrap]
  \label{thm:hd-boot}
  Under Assumptions~\ref{ass:score},~\ref{ass:hd},
  and~\ref{ass:hdboot}, the statistic
  $T_n^* \defeq d_{\what\Sigma_n}^2(Z_n^*, \what H_n)$ satisfies
  \[
    \mathcal L^*(T_n^* \mid X_{n1}, \dots, X_{nn})
    \cd
    \chi_1^2.
  \]
  If $R_{n,2}^*(\what\tau_n) = T_n^* + o_{\P^*}(1)$ with
  $\what\tau_n = \phi(\bar X_n)$, then $R_{n,2}^*(\what\tau_n)$ is also
  a consistent bootstrap approximation to $R_{n,2}(\tau_0)$.
\end{theorem}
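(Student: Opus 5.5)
The plan is to reduce $T_n^*$ to a continuous functional of $(Z_n^*, \what\Sigma_n, \what H_n)$ and pass to the limit conditionally. Since $\what H_n$ is a hyperplane, write $\what H_n = \{v : \what a_n^\top v = 0\}$, normalized so that $\norms{\what a_n} = 1$ and the sign of $\what a_n$ matches $a_0/\norms{a_0}$. The Mahalanobis projection onto a hyperplane through the origin is explicit:
\[
  T_n^* = \frac{(\what a_n^\top Z_n^*)^2}{\what a_n^\top \what\Sigma_n \what a_n}.
\]
This is the same computation as in Corollary~\ref{cor:hd}: minimizing $(z-v)^\top A^{-1}(z-v)$ subject to $a^\top v = 0$ by a Lagrange multiplier gives $(a^\top z)^2/(a^\top A a)$.

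The first step converts Hausdorff convergence on compacts into convergence of normals. I would argue that $d_H(\what H_n \cap B_1, H \cap B_1) \cp 0$ forces $\what a_n \cp a_0/\norms{a_0}$. The reason is that for unit normals $a, b$ in $\R^d$, the Hausdorff distance between the unit-ball slices of the two hyperplanes is bounded below by a constant times $\min(\norms{a-b}, \norms{a+b})$; to see this, take a unit vector in one hyperplane that is maximally tilted relative to the other. This is the only slightly delicate geometric step.

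Next, $\what\Sigma_n \cp \Sigma$ by Assumption~\ref{ass:score}\ref{item:cov}, and $\Sigma$ is positive definite. Hence the denominator converges in probability to $a_0^\top\Sigma a_0/\norms{a_0}^2 > 0$. Writing $g_n(z) = (\what a_n^\top z)^2/(\what a_n^\top\what\Sigma_n\what a_n)$ and $g(z) = (\bar a^\top z)^2/(\bar a^\top\Sigma\bar a)$ with $\bar a = a_0/\norms{a_0}$, the map $(z, a, A) \mapsto (a^\top z)^2/(a^\top A a)$ is continuous near $(\cdot, \bar a, \Sigma)$. The nuisance parameters $(\what a_n, \what\Sigma_n)$ are data-measurable and therefore fixed under $\P^*$.

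The second step is the conditional limit itself. By Assumption~\ref{ass:hdboot}, $Z_n^*$ converges to $Z$ conditionally in bounded-Lipschitz distance in probability. A conditional continuous-mapping argument, run along subsequences with almost-sure convergence of $(\what a_n, \what\Sigma_n, d_{\BL}(\mathcal L^*(Z_n^*), \mathcal L(Z)))$, then gives $\mathcal L^*(g_n(Z_n^*)) \Rightarrow \mathcal L(g(Z))$ in probability. Finally, $\bar a^\top Z \sim \normal(0, \bar a^\top\Sigma\bar a)$, so $g(Z) \sim \chi_1^2$.

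The main technical nuisance in this step is that $g_n$ varies with $n$. I would handle it by uniform-on-compacts convergence $g_n \to g$, combined with tightness of $Z_n^*$, which follows from its bounded-Lipschitz convergence.

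For the last claim, $R_{n,2}^*(\what\tau_n) = T_n^* + o_{\P^*}(1)$ in probability, so by Slutsky conditionally it has the same conditional limit $\chi_1^2$. Corollary~\ref{cor:hd} gives $R_{n,2}(\tau_0) \Rightarrow \chi_1^2$, and $\chi_1^2$ has a continuous distribution function. Therefore Pólya's theorem upgrades both convergences to uniform convergence of CDFs, yielding
\[
  \sup_x \bigl|\P^*(R_{n,2}^* \le x) - \P(R_{n,2}(\tau_0) \le x)\bigr| \cp 0,
\]
that is, bootstrap consistency and consistency of the bootstrap quantiles.
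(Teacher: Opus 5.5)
Your proposal is correct and follows essentially the same route as the paper. Both arguments write $T_n^*$ as a continuous function of $(Z_n^*,\what\Sigma_n,\what H_n)$, apply the conditional continuous mapping theorem using Assumption~\ref{ass:hdboot} and $\what\Sigma_n\cp\Sigma$, identify the limit as $\chi_1^2$ via the hyperplane formula of Corollary~\ref{cor:hd}, and finish with conditional Slutsky; yours is simply more explicit, since you justify the passage from Hausdorff convergence of $\what H_n$ to convergence of the unit normals (which the paper only asserts) and add P\'olya's theorem to obtain Kolmogorov-distance consistency.
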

See Appendix~\ref{proof:hd-boot} for the proof. Compared with the
semiparametric EL bootstrap of~\citet{HjortMcKe09}, the bootstrap here
operates on the profile divergence for the target functional itself, not on an
estimating-equation EL for the vector parameter.  Crucially, every bootstrap
draw is $O(nJ)$: it resamples the score vectors, computes a new mean, and
evaluates a Mahalanobis distance.  No nuisance model is revisited inside the
bootstrap loop!


\section{Corrected bootstrap for nonsmooth functionals}
\label{sec:nonregular}

We now turn to the central challenge that motivates this paper.  When
two or more policies are tied for optimality, the max functional is only
Hadamard directionally differentiable (HDD) at $\theta_0$, and the
local level set converges to a cone rather than a hyperplane.  The
geometry changes: the profile statistic no longer converges
to $\chi_1^2$, the ordinary bootstrap is inconsistent, and a corrected
multiplier bootstrap is needed.  As argued in
Section~\ref{sec:intro}, the uniqueness assumption $|J_0|=1$ required by
existing EL results~\citep{DuchiGlNa21} is at least as hard to verify as
the inference itself, and the experiments in Section~\ref{sec:sim} show
that naively assuming uniqueness leads to severe undercoverage.  This
section develops the theory and implementation of a procedure that adapts
to the unknown geometry automatically.

Recall that $J_0 = \arg\max_j \theta_{0j}$ is the set of optimal
policies and
$\Delta_0 = \min_{k \notin J_0}(\tau_0 - \theta_{0k})$ is the
optimality gap.
By Proposition~\ref{prop:max-levelset}(ii), when the optimal policy set
has $|J_0| > 1$, the level set converges to the cone
$C = \{v : \max_{j \in J_0} v_j = 0\}$.
This cone has $|J_0|$ faces, one for each way of perturbing the policy
values so that a particular subset of the tied policies attains the
maximum.  Near a two-way tie, $C$ is a wedge in $\R^J$; near a
three-way tie, it is the intersection of three half-spaces, and so on.

\begin{assumption}[Cone regularity]
  \label{ass:hdd}
  Assumption~\ref{ass:levelset} holds with
  $C = \{v \in \R^d : \phi'_{\theta_0}(v) = 0\}$, where
  $\phi'_{\theta_0}$ is the directional derivative of~$\phi$
  at~$\theta_0$.
\end{assumption}

For the max functional at a tie, Assumption~\ref{ass:hdd} holds with
$C = \{v : \max_{j \in J_0} v_j = 0\}$
(Proposition~\ref{prop:max-levelset}).

\begin{corollary}[Distance-to-cone limit]
  \label{cor:hdd}
  Under Assumptions~\ref{ass:score} and~\ref{ass:hdd},
  \[
    R_{n,2}(\tau_0) = d_{\what\Sigma_n}^2(Z_n, C) + o_p(1)
    \;\Rightarrow\; d_\Sigma^2(Z, C),
    \qquad Z \sim \normal(0, \Sigma).
  \]
\end{corollary}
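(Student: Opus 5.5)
Corollary~\ref{cor:hdd} is a direct specialization of Theorem~\ref{thm:general-set-limit}. Assumption~\ref{ass:hdd} is, by definition, Assumption~\ref{ass:levelset} with the particular limit set $C = \{v : \phi'_{\theta_0}(v) = 0\}$. So under Assumptions~\ref{ass:score} and~\ref{ass:hdd}, the hypotheses of Theorem~\ref{thm:general-set-limit} hold verbatim, and both the expansion $R_{n,2}(\tau_0) = d_{\what\Sigma_n}^2(Z_n, C) + o_p(1)$ and the weak limit $d_\Sigma^2(Z,C)$ follow at once.

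Before invoking the theorem, I will check that $C$ is closed, which the theorem needs for $d_A^2(\cdot,C)$ to be well defined. In the HDD setting, $\phi'_{\theta_0}$ is continuous (Hadamard directional derivatives are continuous in the direction), so its zero set is closed. In any case, closedness of $C$ is already part of Assumption~\ref{ass:levelset}.

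For completeness, I will sketch how the underlying argument goes, in case the corollary is to be self-contained.
\begin{enumerate}
\item Start from Theorem~\ref{thm:exact-level-set}: $R_{n,2}(\tau_0) = d_{\what\Sigma_n}^2(Z_n, L_{1/\sqrt n}) + o_p(1)$.
\item Localize. Since $Z_n = O_p(1)$ and $\what\Sigma_n \cp \Sigma \succ 0$, and $0 \in C$ (because $\phi'_{\theta_0}(0)=0$), we have $d_{\what\Sigma_n}^2(Z_n,C) = O_p(1)$. Hence, with high probability, the minimizers over both $C$ and $L_{1/\sqrt n}$ lie in a ball $B_M$ for some large fixed $M$; for $L_{1/\sqrt n}$ this uses that $L_t \cap B_M$ is close to $C \cap B_M$, so it contains points near $0$.
\item On $B_M$, the map $S \mapsto d_A^2(z,S)$ is Lipschitz in Hausdorff distance, uniformly over bounded $z$ and over $A$ with eigenvalues bounded away from $0$ and $\infty$. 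Assumption~\ref{ass:levelset} with $t = n^{-1/2}$ then gives $|d_{\what\Sigma_n}^2(Z_n, L_{1/\sqrt n}) - d_{\what\Sigma_n}^2(Z_n, C)| \cp 0$.
\item Finally, $(z,A) \mapsto d_A^2(z,C)$ is continuous on $\R^d \times \{A \succ 0\}$. Joint convergence $(Z_n, \what\Sigma_n) \Rightarrow (Z,\Sigma)$ (by Slutsky, since $\Sigma$ is deterministic) and the continuous mapping theorem then yield $d_{\what\Sigma_n}^2(Z_n,C) \Rightarrow d_\Sigma^2(Z,C)$.
\end{enumerate}

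The only delicate point, if one redoes the argument rather than citing Theorem~\ref{thm:general-set-limit}, is the localization in step 2. Hausdorff convergence is only assumed on compacts, so one must rule out that a far-away piece of $L_{1/\sqrt n}$ gives a spuriously small distance. This cannot happen: any point outside $B_M$ is at Mahalanobis distance at least of order $M - \|Z_n\|$ from $Z_n$. That exceeds the $O_p(1)$ distance to $0$, or to points of $L_{1/\sqrt n}$ near $0$, once $M$ is chosen large relative to the tightness bound on $Z_n$.
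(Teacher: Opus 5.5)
Your proposal is correct and matches the paper. The corollary is Theorem~\ref{thm:general-set-limit} specialized to $C=\{v:\phi'_{\theta_0}(v)=0\}$, and your sketch (reduction via Theorem~\ref{thm:exact-level-set}, localization to a deterministic ball $B_M$, the Hausdorff--Lipschitz comparison, then continuous mapping) is exactly the paper's proof of that theorem, except that $0\in L_{1/\sqrt n}$ holds directly because $\phi(\theta_0)=\tau_0$, so you do not need Hausdorff closeness to find feasible points near the origin.
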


The limit $d_\Sigma^2(Z, C)$ is the squared Mahalanobis distance from a
Gaussian vector to a cone---a weighted $\bar\chi^2$ distribution whose
mixing probabilities depend on the unknown cone $C$ and the
covariance $\Sigma$.  Unlike the hyperplane case, this distribution is
non-pivotal: it depends on the geometry of the tie through $J_0$ and
on $\Sigma$.  Critical values must therefore be estimated from the data.

To see why the standard calibration fails, consider what the $\chi_1^2$
critical value actually assumes.  In the smooth case
(Section~\ref{sec:regular}), the level set is a hyperplane, and the
Mahalanobis projection lands on a single linear subspace---there is
only one ``face'' to project onto, yielding a scalar ratio.  At a cone,
the projection must search over $2^{|J_0|}-1$ faces, one for each
nonempty subset of the tied policies that could attain the maximum.
The $\chi_1^2$ calibration corresponds to projecting onto the tangent
hyperplane of the cone---effectively picking a single face and ignoring
all others.  This is incorrect because the true projection may land on
a different face (or on an edge between faces), and the $\bar\chi^2$
mixing probabilities that weight these events depend on the unknown
geometry.  Neither the number of faces nor their orientations relative
to~$\Sigma$ can be read off from the data without estimating $J_0$.

The ordinary bootstrap cannot provide this estimate.  At any finite
sample, the estimated best policy
$\what j = \arg\max_j \bar X_{n,j}$ is generically unique (ties occur
with probability zero under continuous distributions), so the bootstrap
always sees a unique optimum and estimates the level set as a
hyperplane.  The resulting bootstrap distribution converges to
$\chi_1^2$, regardless of whether the true level set is a hyperplane or
a cone.  When two or more policies are truly tied, the bootstrap
systematically underestimates the critical value and undercovers.  This
is the well-known bootstrap inconsistency for nonsmooth functionals
identified by~\citet{FangSa19}: the ordinary bootstrap linearizes the
directional derivative, collapsing the cone to a tangent hyperplane.

\subsection{Face structure of the profile statistic}

The geometric reduction of Theorem~\ref{thm:exact-level-set} converted
the profile statistic into a Mahalanobis projection; the level-set
analysis of Section~\ref{sec:levelset-theory} identified the limit set
as a cone.  We now make explicit the combinatorial structure of this
projection---a step that is new to this paper and has no analogue in
existing EL or corrected-bootstrap theory.  The face decomposition
below shows exactly how the profile statistic decomposes over the faces
of the cone, revealing which subsets of policies are statistically
competitive for optimality and how the covariance structure governs
the projection cost on each face.

The level set $M_\tau = \{m \in \R^J : \max_j m_j = \tau\}$ is the
union of faces indexed by active sets $A \subseteq \{1,\dots,J\}$, and
the projection onto $M_\tau$ decomposes accordingly.

\begin{proposition}[Face decomposition]
  \label{prop:max-face}
  For a nonempty $A \subseteq \{1,\dots,J\}$, let $C_A$ extract the
  coordinates in~$A$ and define
  \[
    \Pi_A(\tau) = \bar X_n
    - \what\Sigma_n C_A^\top (C_A \what\Sigma_n C_A^\top)^{-1}
    (C_A \bar X_n - \tau\onevec).
  \]
  If $\Pi_A(\tau)_k \le \tau$ for all $k \notin A$, the projection cost
  onto the face $\{m : m_j = \tau,\; j \in A\}$ is
  \[
    R_{n,A}^{\max}(\tau)
    = n\,(C_A \bar X_n - \tau\onevec)^\top
    (C_A \what\Sigma_n C_A^\top)^{-1}
    (C_A \bar X_n - \tau\onevec),
  \]
  and the profile statistic is
  $R_{n,2}^{\max}(\tau) = \min_{A :\, \Pi_A(\tau)_{A^c} \le \tau}
  R_{n,A}^{\max}(\tau) + o_p(1)$.
\end{proposition}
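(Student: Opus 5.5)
Start from \eqref{eq:max-profile-stat}, which Theorem~\ref{thm:exact-level-set} supplies. It gives $R_{n,2}^{\max}(\tau) = n\, d_{\what\Sigma_n}^2(\bar X_n, M_\tau) + o_p(1)$. The $o_p(1)$ term is inherited from that reduction, which is where the positivity constraints are shown to be asymptotically inactive. What remains is a deterministic statement about Mahalanobis projection onto $M_\tau$ for a fixed positive definite $\what\Sigma_n$, and I would prove it for an arbitrary positive definite $S$ in place of $\what\Sigma_n$.

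\textbf{Step 1: split $M_\tau$ into faces.} Write
\[
M_\tau=\bigcup_{\emptyset\neq A}F_A,\qquad F_A=\{m: m_j=\tau \ (j\in A),\ m_k\le\tau \ (k\notin A)\}.
\]
Then $d_S^2(\bar X_n,M_\tau)=\min_A d_S^2(\bar X_n,F_A)$.

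\textbf{Step 2: project onto the affine span of each face.} Let $G_A=\{m: C_A m=\tau\onevec\}$. Minimizing $(m-\bar X_n)^\top S^{-1}(m-\bar X_n)$ subject to $C_A m=\tau\onevec$ is an equality-constrained quadratic program. Its Lagrange conditions give $m-\bar X_n=S C_A^\top\lambda$, and imposing the constraint solves for $\lambda=-(C_A S C_A^\top)^{-1}(C_A\bar X_n-\tau\onevec)$. Substituting returns exactly $\Pi_A(\tau)$, with cost $(C_A\bar X_n-\tau\onevec)^\top (C_A S C_A^\top)^{-1}(C_A\bar X_n-\tau\onevec)$. Multiplying by $n$ gives $R^{\max}_{n,A}(\tau)$. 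Invertibility of $C_A S C_A^\top$ follows because $C_A$ has full row rank.

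\textbf{Step 3: feasible faces give the right cost.} If $\Pi_A(\tau)_{A^c}\le\tau$, then $\Pi_A(\tau)\in F_A\subseteq G_A$. Since $\Pi_A(\tau)$ minimizes over the larger set $G_A$, it also minimizes over $F_A$, so $d_S^2(\bar X_n,F_A)=R^{\max}_{n,A}/n$. This proves the first display of the proposition. It also shows that the minimum over admissible $A$ is at least $d_S^2(\bar X_n,M_\tau)$.

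\textbf{Step 4: the reverse inequality (main obstacle).} I need an admissible $A$ whose cost is no larger than the true distance. Let $m^\ast$ be the projection of $\bar X_n$ onto the closed set $M_\tau$, and take $A^\ast=\{j: m^\ast_j=\tau\}$, which is nonempty.

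\emph{Local stationarity.} The constraints $m_k\le\tau$ for $k\notin A^\ast$ are slack at $m^\ast$. So $m^\ast$ is a local minimizer of the strictly convex quadratic over the affine set $G_{A^\ast}$ (every point of $G_{A^\ast}$ near $m^\ast$ lies in $M_\tau$).

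\emph{Global minimizer.} A local minimizer of a convex function on an affine set is global, so $m^\ast=\Pi_{A^\ast}(\tau)$.

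\emph{Admissibility.} Then $\Pi_{A^\ast}(\tau)_k=m^\ast_k<\tau$ for $k\notin A^\ast$, so $A^\ast$ is admissible. Its cost equals $n\,d_S^2(\bar X_n,M_\tau)$.

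Combining Steps 3 and 4 gives $n\,d_S^2(\bar X_n,M_\tau)=\min_{A\text{ admissible}}R^{\max}_{n,A}(\tau)$ exactly. With $S=\what\Sigma_n$ and \eqref{eq:max-profile-stat}, this yields the claim up to the $o_p(1)$ term. The one subtlety is that $M_\tau$ is nonconvex. I handle it by taking the nearest point $m^\ast$ directly, which exists by closedness, rather than appealing to convex projection theory.
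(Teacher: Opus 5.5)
Your proposal is correct and follows the paper's route. Like the paper, you compute the $\what\Sigma_n^{-1}$-projection onto the affine face $C_A m = \tau\onevec$ by Lagrange multipliers, minimize the face costs over admissible active sets, and inherit the $o_p(1)$ from \eqref{eq:max-profile-stat}, while your Step 4 (nearest point $m^\ast$ of the closed set $M_\tau$, $G_{A^\ast}\subseteq M_\tau$ near $m^\ast$, local-to-global by convexity) rigorously justifies the paper's one-line assertion that the projection onto $M_\tau$ lies on a feasible face.
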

\noindent\emph{Proof.} See Appendix~\ref{proof:max-face}.\medskip

The face decomposition complements the simplex lower bound in
Proposition~\ref{prop:max-lower} and makes precise the source of the
calibration difficulty discussed above.  While the simplex
representation directly yields a lower confidence bound on the optimal
value, the face decomposition reveals which active set---and hence
which subset of tied policies---achieves the minimum projection cost.
When the optimum is unique ($|J_0| = 1$), there is a single face and
the simplex bound recovers the familiar scalar $t$-statistic
(Proposition~\ref{prop:unique-inflation})---the standard $\chi_1^2$
calibration is correct in this case because the projection has no
combinatorial structure.  When multiple policies are tied, the profile
statistic searches over $2^{|J_0|} - 1$ faces of the cone: this is the
combinatorial complexity that the $\chi_1^2$ calibration ignores and
that makes the corrected bootstrap essential.  The minimizing face
identifies which policies the data treats as near-optimal, providing
interpretable diagnostic information alongside the confidence bound.

\subsection{Corrected multiplier bootstrap}

The corrected bootstrap requires an estimate of the cone $C$, which in
turn requires identifying the set of near-optimal policies.  We estimate
this set as
\[
  \what J_n = \Bigl\{
  j : \bar X_{n,j} \ge \max_k \bar X_{n,k} - \kappa_n
  \Bigr\},
  \qquad
  \kappa_n
  = \max_{1 \le j \le J}
  \sqrt{\frac{\what\Sigma_{n,jj}\,\log n}{n}},
\]
and form the estimated cone
$\what C_n = \{v : \max_{j \in \what J_n} v_j = 0\}$.  The threshold
$\kappa_n$ is chosen to decay to zero slowly enough that
$\sqrt{n}\,\kappa_n \to \infty$: this ensures that truly optimal
policies are included in $\what J_n$ (their sample gap from the leader
is $O_p(n^{-1/2}) = o_p(\kappa_n)$), while truly suboptimal policies are
excluded eventually (their gap converges to a positive constant
$\Delta_0 > \kappa_n$).

\begin{proposition}[Validity of the active-set estimator]
  \label{prop:active-set}
  Under Assumption~\ref{ass:score}, the threshold
  $\kappa_n$ satisfies $\kappa_n \cp 0$ and
  $\sqrt{n}\,\kappa_n \cp \infty$.  Consequently:
  \begin{enumerate}[label=(\roman*),leftmargin=2em]
  \item \emph{(Selection consistency.)} With probability tending to one,
    $\what J_n \supseteq J_0$.  If not all policies are optimal
    (i.e., $J_0 \subsetneq \{1,\dots,J\}$, so that
    $\Delta_0 = \min_{k \notin J_0}(\tau_0 - \theta_{0k}) > 0$), then
    $\P(\what J_n = J_0) \to 1$.
  \item \emph{(Cone estimation.)} For every fixed $M < \infty$,
    $d_H(\what C_n \cap B_M,\, C \cap B_M) \cp 0$,
    where $C = \{v : \max_{j \in J_0} v_j = 0\}$.  In particular,
    $\what C_n$ satisfies the cone-estimation requirement in
    Assumption~\ref{ass:mult}.
  \end{enumerate}
\end{proposition}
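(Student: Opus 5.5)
The plan is to first establish the two rate facts about $\kappa_n$, then use them to sandwich the sample gaps, and finally transfer set identification into Hausdorff convergence of the truncated cones.

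\emph{Rates for $\kappa_n$.} By Assumption~\ref{ass:score}\ref{item:cov}, $\what\Sigma_{n,jj} \cp \Sigma_{jj} > 0$ for each $j$, since $\Sigma$ is positive definite. Hence $\max_j \what\Sigma_{n,jj} \cp \sigma_{\max}^2 \defeq \max_j \Sigma_{jj} \in (0,\infty)$. Then $\kappa_n = \sqrt{\max_j\what\Sigma_{n,jj}}\,\sqrt{\log n/n}$ is $O_p(\sqrt{\log n/n}) = o_p(1)$. Also $\sqrt n\,\kappa_n = \sqrt{\max_j \what\Sigma_{n,jj}}\,\sqrt{\log n}$, and this tends to infinity in probability because the first factor is bounded away from zero with probability tending to one.

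\emph{Selection consistency.} Write $\bar X_{n,j} = \theta_{0j} + Z_{n,j}/\sqrt n$. By Assumption~\ref{ass:score}\ref{item:clt} we have $\max_j |Z_{n,j}| = O_p(1)$.

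For $j \in J_0$, the gap is
\[
\max_k \bar X_{n,k} - \bar X_{n,j} \le \max_k \theta_{0k} - \tau_0 + 2\max_k|Z_{n,k}|/\sqrt n = O_p(n^{-1/2}).
\]
Dividing by $\kappa_n$ gives $O_p(1)/(\sqrt n\,\kappa_n) = o_p(1)$. So with probability tending to one the gap is below $\kappa_n$ simultaneously for the finitely many $j\in J_0$, which gives $\what J_n \supseteq J_0$.

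For $k \notin J_0$, when $\Delta_0>0$,
\[
\max_j \bar X_{n,j} - \bar X_{n,k} \ge \tau_0 - \theta_{0k} - 2\max_l|Z_{n,l}|/\sqrt n \ge \Delta_0 - o_p(1).
\]
Since $\kappa_n \cp 0$, this exceeds $\kappa_n$ with probability tending to one. A union bound over the finitely many such $k$ then gives $\P(\what J_n = J_0)\to 1$. When $J_0$ is the full index set, $\what J_n\supseteq J_0$ already forces $\what J_n = J_0$.

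\emph{Cone estimation.} On the event $\{\what J_n = J_0\}$ we have $\what C_n = C$ exactly. In the case $J_0=\{1,\dots,J\}$ we use the containment from part (i), which forces equality. Hence $d_H(\what C_n\cap B_M, C\cap B_M) = 0$ on an event of probability tending to one, which is stronger than convergence in probability. This step is essentially trivial once (i) holds, since the cone depends on the data only through the discrete set $\what J_n$. Given this, the claimed cone-estimation requirement of Assumption~\ref{ass:mult} follows by the same event argument.

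\emph{Main obstacle.} The only delicate point is the first step: it requires that the diagonal of $\Sigma$ be strictly positive, so that $\sqrt n\,\kappa_n$ diverges rather than staying bounded. This is where positive definiteness of $\Sigma$ is used. A secondary point is that the union bounds are over a fixed finite $J$, so no uniformity issues arise.
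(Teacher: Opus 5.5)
Your proposal is correct and follows essentially the same route as the paper's proof: consistency of the diagonal of $\what\Sigma_n$ gives both rates for $\kappa_n$, the gaps for $j \in J_0$ are $O_p(n^{-1/2}) = o_p(\kappa_n)$ while those for $k \notin J_0$ are at least $\Delta_0 - o_p(1)$, and the cones coincide exactly on the event $\{\what J_n = J_0\}$. Your explicit bound of the gaps by $2\max_k |Z_{n,k}|/\sqrt{n}$ just makes the paper's $O_p$ statements concrete.
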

\noindent\emph{Proof.} See Appendix~\ref{proof:active-set}.\medskip

With the estimated cone in hand, we calibrate the profile statistic
using a multiplier bootstrap.  Let $\xi_1, \dots, \xi_n$ be $\simiid$
multipliers independent of the data with $\E\xi_i = 0$,
$\E\xi_i^2 = 1$, $\E|\xi_i|^{2+\eta} < \infty$, and set
$Z_n^\xi = n^{-1/2}\sum_{i=1}^n \xi_i(X_{ni} - \bar X_n)$.

\begin{assumption}[Multiplier bootstrap and cone estimation]
  \label{ass:mult}
  Conditionally on the data,
  \[\sup_{g \in \mathrm{BL}_1(\R^d)}
    |\E_\xi[g(Z_n^\xi)] - \E[g(Z)]| \cp 0
  \] for
  $Z \sim \normal(0, \Sigma)$.  Moreover there exists a random closed
  set $\what C_n$ with
  $d_H(\what C_n \cap B_M, C \cap B_M) \cp 0$ for every fixed~$M$.
\end{assumption}

By Proposition~\ref{prop:active-set}, the active-set estimator
$\what C_n = \{v : \max_{j \in \what J_n} v_j = 0\}$ satisfies the
cone-estimation requirement.
Analogously to Section~\ref{sec:regular}, we write
$\mathcal L_\xi(\cdot \mid \text{data})$ for the conditional law
under the multiplier distribution.

\begin{theorem}[Corrected multiplier bootstrap]
  \label{thm:hdd-boot}
  Under Assumptions~\ref{ass:score},~\ref{ass:hdd},
  and~\ref{ass:mult}, the statistic
  $T_n^\xi \defeq d_{\what\Sigma_n}^2(Z_n^\xi, \what C_n)$ satisfies
  \[
    \mathcal L_\xi(T_n^\xi \mid X_{n1}, \dots, X_{nn})
    \;\Rightarrow_p\;
    \mathcal L\!\bigl(d_\Sigma^2(Z, C)\bigr).
  \]
  If the cdf of $d_\Sigma^2(Z,C)$ is continuous, the convergence holds
  uniformly in the Kolmogorov metric.
\end{theorem}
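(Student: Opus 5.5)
The plan is to treat $T_n^\xi$ as a continuous functional of three inputs: the multiplier process $Z_n^\xi$, the covariance estimate $\what\Sigma_n$, and the estimated cone $\what C_n$. The limit law is then obtained by a conditional continuous-mapping argument. The deterministic core is a perturbation bound on $(z,A,S)\mapsto d_A^2(z,S)$, which I would establish first. Fix $M$ and $K$ with $K$ large relative to $M$. If $\norms{z}\le M$ and $A$ lies in a compact set of positive definite matrices, say $\lambda I\preceq A\preceq \Lambda I$, then any minimizer $v$ of $d_A^2(z,S)$ satisfies $\norms{v}\le K$. This holds because $0\in S$ for every closed cone $S$ of the form $\{v:\max_{j\in J}v_j=0\}$, and in general whenever $S\cap B_{K_0}\neq\emptyset$, which follows from Hausdorff convergence on a large ball. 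Consequently $d_A^2(z,S)=d_A^2(z,S\cap B_K)$. On $B_K$ the map is Lipschitz in $z$ and in $A^{-1}$, and it moves by at most $C(M,K,\lambda)\,d_H(S\cap B_K,S'\cap B_K)$ in $S$. The $S$-bound holds because for a near-optimal $v\in S\cap B_K$ there is $v'\in S'$ within Hausdorff distance, and $(z-v)^\top A^{-1}(z-v)$ is Lipschitz on bounded sets.

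With this in hand, I would decompose
\[
T_n^\xi - d_\Sigma^2(Z_n^\xi, C) = \bigl[d_{\what\Sigma_n}^2(Z_n^\xi,\what C_n)-d_{\what\Sigma_n}^2(Z_n^\xi,C)\bigr] + \bigl[d_{\what\Sigma_n}^2(Z_n^\xi,C)-d_{\Sigma}^2(Z_n^\xi,C)\bigr].
\]
On the event $\{\norms{Z_n^\xi}\le M\}$, the first bracket is at most $C\,d_H(\what C_n\cap B_K,C\cap B_K)$, which is $o_p(1)$ by Assumption~\ref{ass:mult}. This quantity is data-measurable, so it is negligible in outer probability. The second bracket is at most $C\norms{\what\Sigma_n^{-1}-\Sigma^{-1}}$, which is $o_p(1)$ by Assumption~\ref{ass:score}\ref{item:cov}. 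Conditional tightness of $Z_n^\xi$ follows from the bounded-Lipschitz convergence in Assumption~\ref{ass:mult}: choose $M$ with $\P(\norms{Z}>M-1)<\epsilon$ and test against a smoothed indicator. These steps give $\P_\xi(|T_n^\xi-d_\Sigma^2(Z_n^\xi,C)|>\delta)\cp0$ for every $\delta>0$.

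It then remains to show $\mathcal L_\xi(d_\Sigma^2(Z_n^\xi,C))\Rightarrow_p\mathcal L(d_\Sigma^2(Z,C))$. The map $z\mapsto d_\Sigma^2(z,C)$ is continuous and locally Lipschitz, but it grows quadratically, so it is not globally Lipschitz. For $g\in\BL_1(\R)$, the composition $g\circ d_\Sigma^2(\cdot,C)$ is bounded and Lipschitz on each $B_M$. I would truncate, using tightness as above, and then apply the BL convergence of $Z_n^\xi$ to a rescaled BL version of the truncated composition. Combining this with the previous step yields $\sup_{g\in\BL_1}|\E_\xi g(T_n^\xi)-\E g(d_\Sigma^2(Z,C))|\cp0$. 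For the Kolmogorov statement, I would invoke Pólya's theorem conditionally. Weak convergence plus a continuous limit cdf gives uniform convergence: take a finite grid of continuity points, and at each grid point the conditional cdf converges in probability, by sandwiching the indicator between BL functions.

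The main obstacle I expect is the uniformity of the Hausdorff-continuity bound when only localized convergence $d_H(\cdot\cap B_M,\cdot\cap B_M)$ is available. One must verify that truncating the set at radius $K$ does not change the distance, for both $\what C_n$ and $C$. Near-optimizers of $\what C_n$ could lie near the boundary of $B_K$, where Hausdorff distances of truncated sets behave badly. I would handle this by taking $K\ge 2M+ \sqrt{\Lambda/\lambda}\,(M+K_0)$, so that all relevant minimizers lie strictly inside $B_{K/2}$. In the max-functional case both sets are cones containing $0$ and this is immediate.
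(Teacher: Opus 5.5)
Your proposal is correct and follows essentially the same route as the paper: you localize the Mahalanobis projections to a fixed ball using a feasible point near the origin, bound the cost of swapping $C$ for $\widehat C_n$ by a Lipschitz constant times the truncated Hausdorff distance, and finish with conditional tightness of $Z_n^\xi$ and a continuous-mapping step. The differences are minor: you replace $\widehat\Sigma_n$ by $\Sigma$ through an explicit bound on $\|\widehat\Sigma_n^{-1}-\Sigma^{-1}\|$ instead of invoking the conditional continuous mapping theorem, you spell out the P\'olya step, and you get a point of $\widehat C_n$ near the origin from Hausdorff convergence rather than assuming, as the paper's proof does, that $\widehat C_n$ is a cone containing $0$ (a property Assumption~\ref{ass:mult} does not actually state).
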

\noindent\emph{Proof.} See Appendix~\ref{proof:hdd-boot}.\medskip

Theorem~\ref{thm:hdd-boot} is the profile-EL counterpart of the
Fang--Santos corrected bootstrap~\citep{FangSa19}.  The local map
$z \mapsto d_\Sigma^2(z, C)$ is not imposed externally but arises from
the profile-divergence geometry itself.

\paragraph{Summary of the inferential procedure}

The corrected bootstrap critical value is the empirical $(1-\alpha)$
quantile of
$T_n^{\xi,(b)} = d_{\what\Sigma_n}^2(Z_n^{\xi,(b)}, \what C_n)$ over
$b = 1, \dots, B$ multiplier draws.  Plugging
$c = \what c_{1-\alpha}$ into Proposition~\ref{prop:max-lower} yields
the final lower confidence bound.

The entire procedure, after the initial nuisance fit, proceeds in five
steps: (1)~form the AIPW score vectors on the evaluation sample;
(2)~compute the sample mean $\bar X_n$ and covariance $\what\Sigma_n$;
(3)~estimate the active set $\what J_n$ via the threshold $\kappa_n$,
treating the problem as smooth if $|\what J_n| = 1$ and nonsmooth
otherwise; (4)~calibrate using $\chi_1^2$ or the ordinary score
bootstrap in the smooth case (Theorem~\ref{thm:hd-boot}), or the
corrected multiplier bootstrap in the nonsmooth case; and (5)~solve
the $J$-simplex program in Proposition~\ref{prop:max-lower} with the
resulting critical value.  The bootstrap in step~(4) operates entirely
on the score vectors---each draw is $O(nJ)$ with no reference to
the nuisance models---and the simplex program in step~(5) is
effectively of dimension~$|\what J_n|$, because the optimal weight
concentrates on the near-optimal policies.


\section{Simulation study}
\label{sec:sim}

We present four sets of experiments.  The first two are score-level
simulations that isolate the geometric advantage of direct profiling
over the projected joint alternative.  The third validates the method
end-to-end in a semiparametric policy-evaluation pipeline with
cross-fitted nuisance models.  The fourth compares our method against
the non-EL corrected bootstrap of~\citet{FangSa19}.

Throughout, we construct one-sided 95\% lower confidence bounds for
$\tau_0 = \max_j \theta_{0j}$ and report empirical coverage and the average
shortfall $\tau_0 - L_n$ (smaller is tighter, subject to coverage).  The three
methods are:
\begin{enumerate}[label=(\alph*),leftmargin=1.5em]
\item \textbf{Ours:} the lower bound from
  Proposition~\ref{prop:max-lower} with the corrected multiplier
  bootstrap of Theorem~\ref{thm:hdd-boot}.
\item \textbf{Projected joint EL:} the same simplex lower-bound
  formula but with the ambient critical value $\chi^2_{J,0.95}$.
  This is the bound obtained by inverting a joint $\chi^2_J$
  confidence region and projecting onto the max functional---the
  standard workaround when a direct profile is unavailable.
\item \textbf{Selected-policy Wald:} choose
  $\what j = \arg\max_j \bar X_{n,j}$ and report
  $\bar X_{n,\what j} - z_{0.95}\sqrt{\what\Sigma_{n,\what j\what j}/n}$.
\end{enumerate}
All score-level Monte Carlo cells use 1{,}000 repetitions and
1{,}000 multiplier draws; the semiparametric experiment uses
150~repetitions. 

\subsection{The dimension penalty of projected joint EL}
\label{sec:sim-dimension}

Proposition~\ref{prop:unique-inflation} predicts that, at a unique
optimum, the projected joint approach inflates the lower-bound
radius by a factor of $\sqrt{\chi^2_{J,0.95}/\chi^2_{1,0.95}}$, which
grows as roughly $\sqrt J$.  We test this prediction directly.

We fix $n = 500$ and vary $J \in \{5, 10, 20, 50, 100\}$.  The
mean vector has a unique optimum at $\theta_{0,1} = 0.35$ with all
other coordinates strictly below.  Scores are generated as
\[
  X_{ni} = \theta_0 + 0.70\,G_i + 0.20\,E_i + 0.10\,S_i b,
\]
where $G_i$ is mean-zero Gaussian with covariance
$\Sigma_{jk} = (0.5^{|j-k|} + 0.15)\sqrt{v_j v_k} /
(0.5^0 + 0.15)$, $v_j = 1 - 0.3(j-1)/(J-1)$,
the vector $E_i$ has independent standardized~$t_5$ coordinates,
$S_i \sim \mathrm{Exp}(1) - 1$ is a scalar, and
$b_j = 1 - 0.5(j-1)/(J-1)$.  The mixture of Gaussian, heavy-tailed,
and skewed components ensures the scores are not Gaussian.

\begin{figure}[ht]
  \centering
  \includegraphics[width=\textwidth]{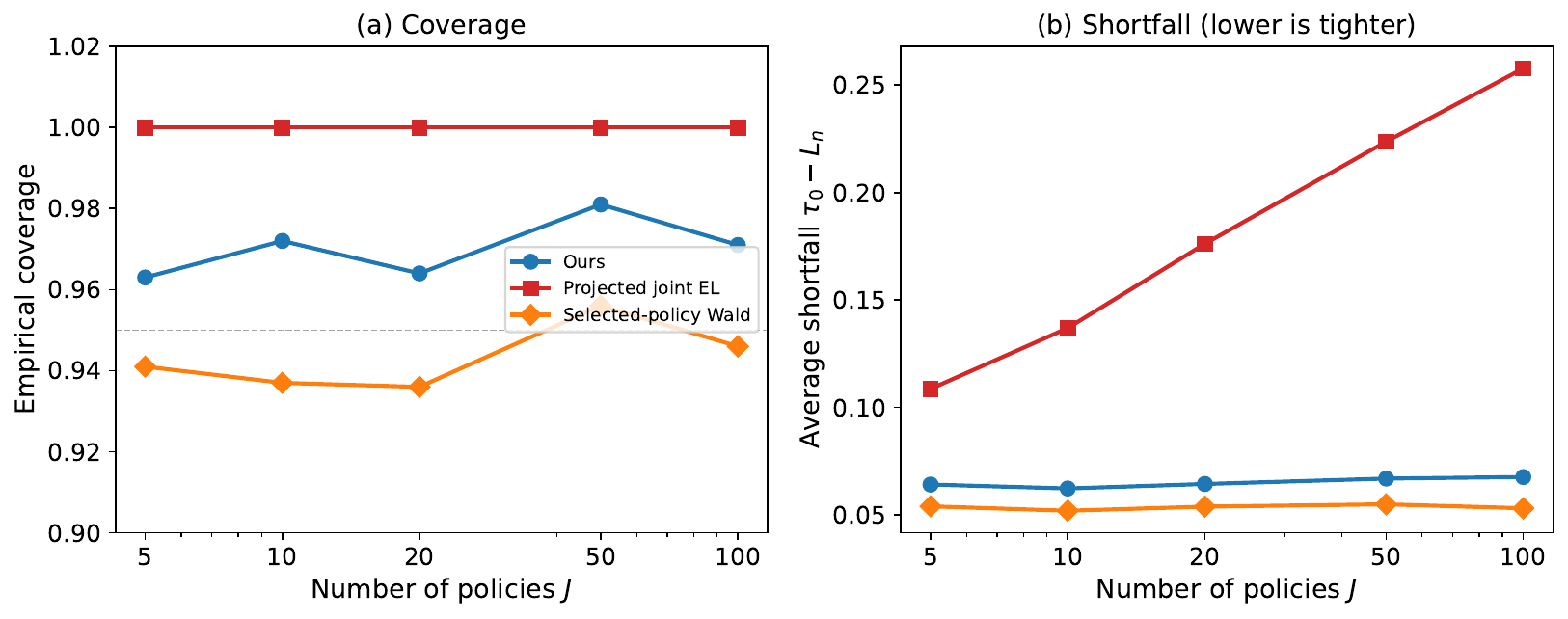}
  \caption{Coverage and average shortfall of one-sided 95\% lower
    bounds as a function of $J$, with $n = 500$ and a unique optimum.
    The projected joint approach pays a dimension penalty that
    grows as $\sqrt J$.  Our method avoids this entirely.}
  \label{fig:dimension-penalty}
\end{figure}

Figure~\ref{fig:dimension-penalty} confirms the theoretical prediction.
The projected joint shortfall grows approximately as $\sqrt J$: at
$J = 100$, its lower bound is roughly four times wider than ours.  Its
coverage is always~$1$, reflecting systematic overconservatism---it
pays a $J$-dimensional critical value for a scalar target.  Our method
maintains near-nominal coverage across all values of~$J$, with
shortfall that barely changes because the profile critical value
converges to~$\chi^2_1$.  

\subsection{Robustness at ties}
\label{sec:sim-ties}

The second experiment uses the same score-level DGP as
Section~\ref{sec:sim-dimension} but examines what happens when
multiple policies share the optimal value.  We fix $J = 20$ and vary
the tie multiplicity $k \in \{1, 2, 4, 8\}$ at sample sizes
$n \in \{500, 1{,}000\}$.

\begin{figure}[ht]
  \centering
  \includegraphics[width=\textwidth]{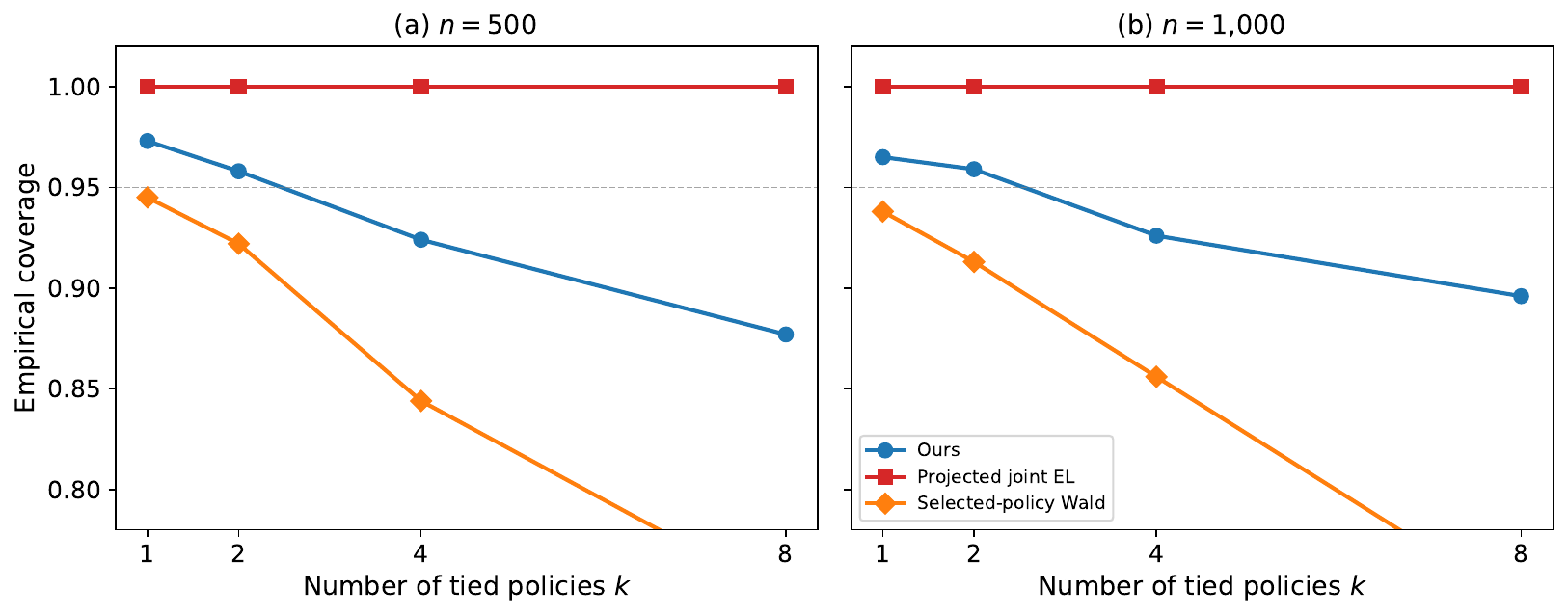}
  \caption{Coverage of one-sided 95\% lower bounds as a function of
    tie multiplicity~$k$ ($J = 20$, $n = 500$ and $n = 1{,}000$).
    The selected-policy method undercovers sharply at ties; the
    projected joint EL method overcovers regardless; our method
    maintains near-nominal coverage by adapting to the cone geometry.}
  \label{fig:tie-robustness}
\end{figure}

Figure~\ref{fig:tie-robustness} shows the expected separation.  The
selected-policy Wald bound degrades substantially as ties increase,
dropping well below~$90\%$ coverage at $k = 8$: it treats the
policy-selection step as deterministic, which fails at exactly the
decision-relevant boundary where the choice is ambiguous.  The
projected joint EL approach maintains
100\% coverage regardless, because its $J$-dimensional critical value
is so conservative that even the nonsmooth geometry cannot erode it.

Our method adapts: the estimated active set $\what J_n$ detects the
near-optimal policies, and the distance-to-cone bootstrap
(Theorem~\ref{thm:hdd-boot}) produces the correct nonstandard critical
value.  Coverage stays near the nominal level across all tie
configurations, and improves as $n$ grows.  This is the practical
payoff of the primal geometric reduction: the same procedure works at
smooth and nonsmooth points without the user having to know which
regime applies.

\subsection{Semiparametric policy evaluation}
\label{sec:sim-semi}

The third experiment validates the full pipeline.  We generate
observational data $(W, A, Y)$ with $W \sim N(0, I_6)$, propensity
$e(w) = \mathrm{expit}(0.6 w_1 - 0.5 w_2 + 0.3 w_3 w_4 -
0.2(w_5^2 - 1) + 0.15\sin w_6)$ clipped to $[0.1, 0.9]$, baseline
outcome
$\mu_0(w) = 0.5 w_1 - 0.3 w_2 + 0.2 w_3^2 - 0.15 w_4 w_5 +
0.2\cos w_6$, treatment effect
$\tau(w) = 0.6\sin w_1 + 0.4 \cdot \mathbf{1}[w_2 > 0] - 0.3 w_3 +
0.2 w_1 w_2$, and
$Y = \mu_0(W) + A\,\tau(W) + \varepsilon$,
$\varepsilon \sim N(0,1)$.  The $J = 20$ policies are linear threshold
rules $\pi_j(w) = \mathbf{1}[\beta_j^\top w_{1:3} + b_j > 0]$.
Nuisance functions are estimated with
random forests via two-fold cross-fitting, producing AIPW scores to
which all three methods are applied.  True policy values are computed
by Monte Carlo integration over the known DGP.

We construct the policy class to reflect the practically relevant
case where personalization effects are modest: five policies have
true values within~$0.01$ of the optimum (created by perturbing the
weight vector of the best policy), with the remaining~15 drawn with
attenuated random weights so they are clearly suboptimal.  This near-tied
regime arises naturally whenever several candidate treatment rules
target similar subpopulations---the typical situation in which a
practitioner most needs reliable inference on the best attainable
value.

\begin{table}[ht]
  \centering
  \caption{Coverage and average shortfall for the semiparametric
    policy-evaluation experiment ($J = 20$, five near-tied optimal
    policies, cross-fitted AIPW scores, 150 Monte Carlo repetitions).
    The selected-policy Wald method undercovers at every sample size;
    our method maintains valid coverage while achieving bounds
    roughly~$3\times$ tighter than projected joint EL.}
  \label{tab:semi}
  \begin{tabular}{l@{\hskip 1.5em}cc@{\hskip 1.5em}cc@{\hskip 1.5em}cc}
    \toprule
    & \multicolumn{2}{c}{$n = 500$}
    & \multicolumn{2}{c}{$n = 1{,}000$}
    & \multicolumn{2}{c}{$n = 2{,}000$} \\
    \cmidrule(lr){2-3} \cmidrule(lr){4-5} \cmidrule(lr){6-7}
    Method & Cov. & Shortfall
    & Cov. & Shortfall
    & Cov. & Shortfall \\
    \midrule
    Ours
      & 0.973 & 0.151
      & 0.993 & 0.099
      & 0.960 & 0.069 \\
    Projected joint EL
      & 1.000 & 0.419
      & 1.000 & 0.293
      & 1.000 & 0.206 \\
    Selected-policy Wald
      & 0.913 & 0.110
      & 0.940 & 0.073
      & 0.907 & 0.054 \\
    \bottomrule
  \end{tabular}
\end{table}

Table~\ref{tab:semi} confirms that the score-level theory carries over
to the semiparametric setting.  The selected-policy Wald method
undercovers at every sample size: its coverage is $91.3\%$ at
$n = 500$, $94.0\%$ at $n = 1{,}000$, and $90.7\%$ at
$n = 2{,}000$---well below the nominal~$95\%$ throughout.
This occurs because it treats the
policy-selection step as deterministic, ignoring the selection
uncertainty among the near-optimal candidates.  This is the failure
mode discussed in Section~\ref{sec:intro}: the uniqueness assumption
$|J_0| = 1$ is unverifiable a priori, yet methods that rely on it
miscalibrate when it fails.

Our method maintains valid coverage at every sample size
($97.3\%$, $99.3\%$, and $96.0\%$) by detecting the near-optimal
policies through the estimated active set~$\what J_n$ and calibrating
the critical value with the distance-to-cone bootstrap
(Theorem~\ref{thm:hdd-boot}).
The projected joint method maintains $100\%$ coverage throughout, but at a
shortfall penalty exceeding~$3\times$, closely matching the
theoretical prediction
$\sqrt{\chi^2_{20,0.95}/\chi^2_{1,0.95}} \approx 2.9$ from
Proposition~\ref{prop:unique-inflation}.
Our method is the only one that simultaneously achieves valid coverage
and competitive tightness in the near-tied regime.

\paragraph{Computational advantage.}
The statistical comparison above is against the projected joint $\chi^2_J$
bound, which uses the same score-level computation as our method and therefore
has similar runtime.  To measure the \emph{computational} advantage of the
score-level approach, we run a separate timing experiment comparing our method
(fit nuisances once, score-level bootstrap with $B = 1{,}000$ draws) against a
refit-each-resample plug-in bootstrap that refits all nuisance models
$(\what e, \what m_0, \what m_1)$ inside each of the same $B = 1{,}000$
resamples.  On a comparable DGP with $J = 30$ and $n = 2{,}000$, the refit
approach is roughly $460\times$ slower: each refit iteration re-trains three
random forest models, while our score-level bootstrap requires only a single
matrix multiplication per draw.  This is the algorithmic payoff of operating
at the score level: once the AIPW scores have been formed, our bootstrap never
revisits the nuisance models.  Any semiparametric EL method that couples the
bootstrap to the nuisance-estimation step---including a full implementation of
\citet{HjortMcKe09}---would incur a comparable refit cost.

\subsection{Comparison with non-EL methods}
\label{sec:sim-fang-santos}

The corrected multiplier bootstrap of~\citet{FangSa19} is the
leading non-EL method for HDD targets.  Both methods use the same
active-set and cone information, so their coverage and shortfall are
comparable when the tied policies are highly correlated.  The
structural difference is in how the lower bound is formed.
The~\citet{FangSa19} bound is
$L_n^{\mathrm{FS}} = \max_j \bar X_{n,j} -
\what q_{1-\alpha}/\sqrt n$,
which is anchored to the pointwise maximum.
Ours is
$L_n = \max_{w \in \Delta_J}\{w^\top \bar X_n - \sqrt{c/n}\,
\sqrt{w^\top\what\Sigma_n w}\}$
(Proposition~\ref{prop:max-lower}), which optimizes over
\emph{policy mixtures}.  When several near-optimal policies have low
pairwise correlation, the simplex optimizer finds a mixture~$w^*$ that
maintains a high mean while substantially reducing the variance term
$w^{*\top}\what\Sigma_n w^*$.  This is a portfolio diversification
effect that the pointwise-max formulation cannot exploit.

We isolate this effect with $J = 10$ policies, the top $k = 3$ tied
at~$0.30$, equi-correlation~$\rho$ among the tied group, and
$n = 2{,}000$.  When $\rho$ is large, the tied policies move together
and diversification is impossible.  When $\rho$ is small, a uniform
mixture over the $k$ tied policies has variance
$(1 + (k-1)\rho)/k$, which can be much smaller than any single-policy
variance.  We use $5{,}000$ Monte Carlo repetitions for this
experiment to produce smooth coverage curves.

\begin{figure}[ht]
  \centering
  \includegraphics[width=\textwidth]{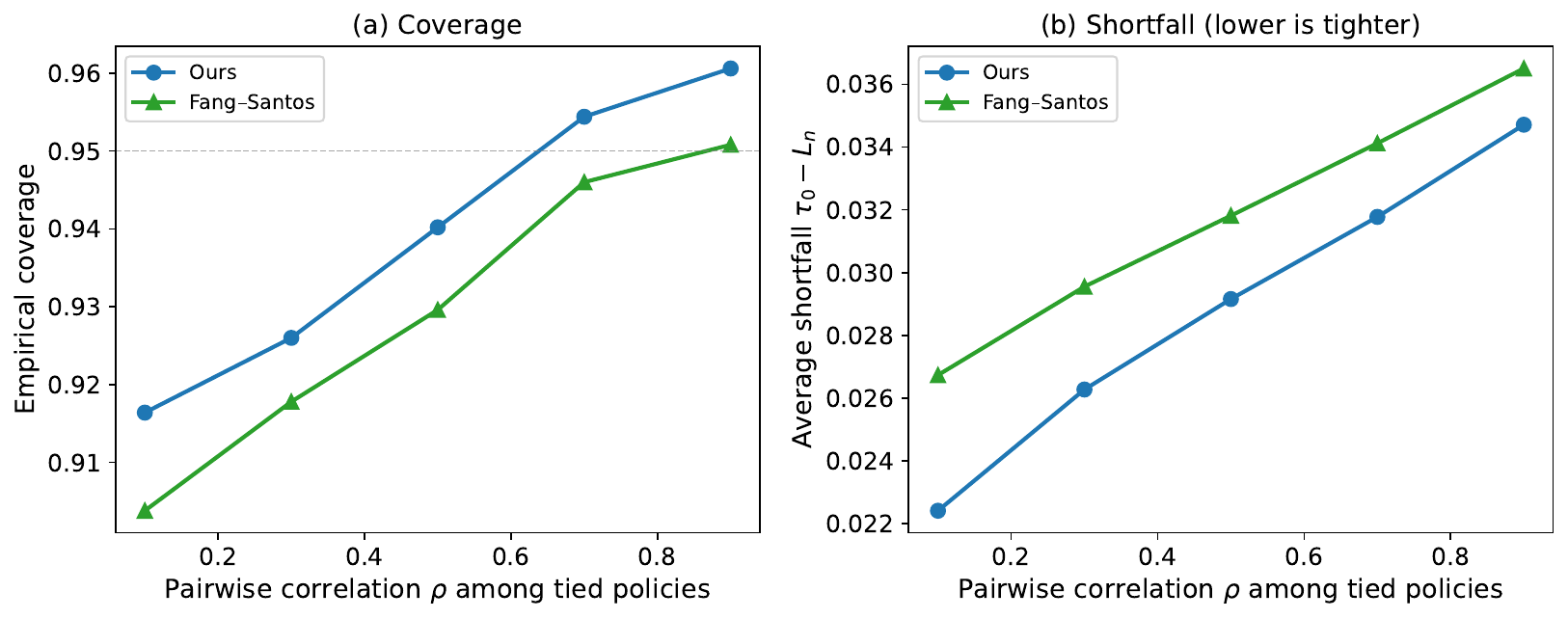}
  \caption{Our method vs.\ \citet{FangSa19} as the pairwise
    correlation~$\rho$ among $k = 3$ tied policies varies ($J = 10$,
    $n = 2{,}000$, $5{,}000$ repetitions).  (a)~Both methods maintain
    near-nominal coverage across the range; our method matches or
    exceeds~\citet{FangSa19} coverage throughout.  (b)~Our shortfall
    drops sharply as~$\rho$ decreases because the simplex optimizer
    diversifies across tied policies; the~\citet{FangSa19} shortfall
    is relatively flat.}
  \label{fig:fang-santos}
\end{figure}

Figure~\ref{fig:fang-santos} confirms the diversification mechanism.
Both methods maintain near-nominal coverage across the full range
of~$\rho$, with our method matching or exceeding~\citet{FangSa19}
coverage at every value.
At $\rho \ge 0.7$, the two methods are essentially identical in
shortfall.  As $\rho$ decreases, our shortfall drops while
the~\citet{FangSa19} shortfall is much less sensitive to correlation.
In the moderate range $\rho \in [0.3, 0.5]$, which is typical of
policy-evaluation scores for distinct treatment rules applied to the
same population, our method produces bounds that are $5$--$10\%$
tighter at equal or higher coverage.

The diversification advantage has a decision-theoretic interpretation.
In policy evaluation, distinct policies targeting different subgroups
can have similar expected values but weakly correlated scores.  The
simplex lower bound exploits this structure: the optimizer implicitly
hedges across the near-optimal policies, producing a tighter
confidence statement about the best attainable value.  This hedging is
a direct consequence of the simplex representation in
Proposition~\ref{prop:max-lower} and has no analogue in the
pointwise-max construction of~\citet{FangSa19}.


\section{Discussion and conclusion}
\label{sec:discussion}

The results of this paper rest on a single geometric observation: the
Euclidean profile statistic for a score-level empirical likelihood
equals the Mahalanobis distance from the score mean to a local level
set of the target map, and the first-order shape of that level set
determines both the limiting distribution and the correct bootstrap.
We close by situating this observation within the broader landscape of
semiparametric inference and by noting the boundaries of the present
theory.

The closest antecedents are the semiparametric EL results of
\citet{HjortMcKe09} and \citet{BravoEsVa20}, which establish Wilks
phenomena and bootstrap validity for estimating equations and smooth
functionals.  Our contribution is orthogonal: we profile the
\emph{target functional itself} rather than an estimating equation whose
solution equals the target.  For the best-policy-value problem, this
distinction is consequential---it lets us bypass both the projected
joint approach (which pays a $J$-dimensional critical value for a scalar
quantity) and the selected-policy approach (which ignores selection
uncertainty at ties).

The corrected multiplier bootstrap we derive for the nonregular case
builds on the same local-process information used by
\citet{FangSa19} and \citet{HongLi18,HongLi20} in their work on HDD
plug-in estimators.  The difference is in the object being calibrated.
Those papers produce corrected confidence intervals around a point
estimate; ours produces a profile-divergence confidence bound that can
be directly inverted through the simplex program of
Proposition~\ref{prop:max-lower}.  In structured optimization problems
where the tangent cone is explicit---best policy value, minimum risk
over a finite set, best subgroup treatment effect---this directness is a
practical advantage.

A unifying theme of the paper is that the uniqueness assumption
$|J_0|=1$ required by existing EL methods~\citep{DuchiGlNa21} is at
least as hard to verify as the inferential task itself: certifying a
strict optimum demands the same statistical evidence as the confidence
bound being constructed.  Our procedure avoids this circularity by
estimating the level-set geometry from the data, adapting automatically
to the hyperplane or cone case.  The simulation results in
Section~\ref{sec:sim} show that methods assuming uniqueness can undercover by several
percentage points when the assumption is violated.

The finite-policy setting we use as a motivating example gives especially
clean closed forms because the max functional has polyhedral level sets.
Extending the approach to continuous policy classes---where the level-set
geometry is richer---is a natural direction for future work.



\bibliographystyle{abbrvnat}
\setlength{\bibsep}{.7em}

\bibliography{main}

\newpage
\appendix

\section{Proofs}
\label{sec:proofs-geometry}

\subsection{Proof of Theorem~\ref{thm:exact-level-set}}
\label{proof:exact-level-set}

We work with the rescaled score-profile statistic. Noting that the constraint
\[
  \phi\!\left(\sum_{i=1}^n q_i X_{ni}\right)=\tau_0
\]
is equivalent to
\[
  \sum_{i=1}^n q_i X_{ni}=\theta_0+n^{-1/2}v
  \qquad\text{for some }v\in L_{1/\sqrt n},
\]
for $v \in \R^d$, define
\[
  \mathcal R_n(v)
  \defeq
  \inf\Biggl\{
  \sum_{i=1}^n (nq_i - 1)^2 :
  q \in \Delta_n,\
  \sum_{i=1}^n q_i X_{ni} = \theta_0 + n^{-1/2}v
  \Biggr\}
\]
so that $R_{n,2}(\tau_0)=\inf_{v\in L_{1/\sqrt n}}\mathcal R_n(v)$.
We rely on the following intermediate result whose proof we defer to
Section~\ref{proof:pointwise}, which shows that the Euclidean shift cost
reduces to a Mahalanobis distance.
\begin{proposition}[Pointwise shift cost]
  \label{prop:pointwise}
  
  Under Assumption~\ref{ass:score}, for every fixed $M < \infty$,
  \[
    \sup_{\norms{v} \le M}
    \bigl|
    \mathcal R_n(v) - (v - Z_n)^\top \what\Sigma_n^{-1}(v - Z_n)
    \bigr|
    \cp 0.
  \]
  With probability tending to one, equality holds for all
  $\norms{v} \le M$.
\end{proposition}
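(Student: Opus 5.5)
The plan is to solve the problem defining $\mathcal R_n(v)$ exactly after relaxing positivity, and then to show that the positivity constraints are inactive uniformly over $\norms{v}\le M$ with probability tending to one.

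\textbf{Relaxed problem.} Write $q_i=(1+a_i)/n$ and $U_{ni}=X_{ni}-\bar X_n$. The simplex constraint $\sum_i q_i=1$ becomes $\sum_i a_i=0$. Since $\sum_i q_i X_{ni}=\bar X_n+n^{-1}\sum_i a_iU_{ni}$ whenever $\sum_i a_i=0$, the mean constraint becomes
\[
  \textstyle\sum_i a_iU_{ni}=\sqrt n\,(v-Z_n).
\]
The objective is $\sum_i a_i^2$.

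Drop $a_i\ge -1$ and let $\widetilde{\mathcal R}_n(v)$ be the infimum over $a\in\R^n$ subject only to these two linear constraints. Let $\mathbf U\in\R^{n\times d}$ have rows $U_{ni}^\top$. Then $\mathbf U^\top\onevec=0$ and $\mathbf U^\top\mathbf U=n\what\Sigma_n$. By Assumption~\ref{ass:score}\ref{item:cov}, $\what\Sigma_n$ is invertible with probability tending to one.

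On that event, the minimum-norm solution of $\mathbf U^\top a=b$ is $a=\mathbf U(\mathbf U^\top\mathbf U)^{-1}b$. This solution lies in the column space of $\mathbf U$, which is orthogonal to $\onevec$, so $\sum_i a_i=0$ holds automatically. Adding the constraint $\sum_i a_i=0$ therefore does not change the minimizer. This gives exactly the weights \eqref{eq:min-norm-weights}, and
\[
  \widetilde{\mathcal R}_n(v)=(v-Z_n)^\top\what\Sigma_n^{-1}(v-Z_n).
\]
For the optimality claim I would use a one-line orthogonal-decomposition argument: any feasible $a$ equals $a^*$ plus an element of $\ker\mathbf U^\top$, which is orthogonal to the column space of $\mathbf U$, so its norm is at least $\norms{a^*}$.

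\textbf{Positivity is inactive.} Since the feasible set of $\mathcal R_n(v)$ is contained in that of the relaxed problem, $\mathcal R_n(v)\ge\widetilde{\mathcal R}_n(v)$. Equality holds whenever $a^*_{ni}(v)\ge-1$ for all $i$. Uniformly over $\norms{v}\le M$, Cauchy--Schwarz gives
\[
  \max_i|a^*_{ni}(v)|\le \frac{\max_i\norms{U_{ni}}}{\sqrt n}\,\norms{\what\Sigma_n^{-1}}_{\mathrm{op}}\,\bigl(M+\norms{Z_n}\bigr).
\]
The first factor is $o_p(1)$ by Assumption~\ref{ass:score}\ref{item:max}. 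The second is $O_p(1)$ by \ref{item:cov} and positive definiteness of $\Sigma$. The third is $O_p(1)$ by \ref{item:clt}. So with probability tending to one the bound is below $1/2$ simultaneously for all $\norms{v}\le M$.

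On this event the relaxed minimizer is feasible for the original problem, hence $\mathcal R_n(v)=\widetilde{\mathcal R}_n(v)$ for all $\norms{v}\le M$. This proves the equality statement, and the uniform $o_p(1)$ statement follows immediately because the difference is zero on an event of probability tending to one.

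\textbf{Main obstacle.} The only delicate point is getting uniformity in $v$ together with invertibility of $\what\Sigma_n$. Both are handled by working on a single high-probability event on which $\lambda_{\min}(\what\Sigma_n)\ge\lambda_{\min}(\Sigma)/2$ and $\norms{Z_n}\le K$ for a suitably large $K$, then letting $K\to\infty$ slowly. Off this event no claim is needed, since its probability vanishes.
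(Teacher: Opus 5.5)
Your proposal is correct and follows essentially the same route as the paper's proof. Both drop positivity, take the minimum-norm solution $a^*=\mathbf U(\mathbf U^\top\mathbf U)^{-1}\sqrt n\,(v-Z_n)$ with cost $(v-Z_n)^\top\what\Sigma_n^{-1}(v-Z_n)$, and then use the Lindeberg condition to show $\sup_{\norms{v}\le M}\max_i|a^*_{ni}(v)|\cp 0$, so positivity is inactive with probability tending to one. You also check explicitly that $a^*$ satisfies $\sum_i a_i=0$, because the column space of $\mathbf U$ is orthogonal to $\onevec$. The paper passes over this point silently.
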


To apply the proposition, we must localize to a deterministic compact
set. Since $\norms{Z_n} = O_p(1)$ and $\what\Sigma_n \cp \Sigma \succ 0$, for
every $\varepsilon > 0$ there exist deterministic constants $K < \infty$ and
$0 < \underline\lambda < \overline\lambda < \infty$ such that
\begin{equation}
  \mathcal E_n \defeq \Biggl\{
  \norms{Z_n} \le K,\quad
  \underline\lambda I_d \preceq \what\Sigma_n \preceq \overline\lambda I_d
  \Biggr\} 
  \label{eq:good-event}
\end{equation}
has $\P(\mathcal{E}_n) \ge 1-\varepsilon$ for all sufficiently large $n$.
On $\mathcal E_n$, because $0 \in L_{1/\sqrt n}$,
\[
  R_{n,2}(\tau_0) \le \mathcal R_n(0).
\]
Apply Proposition~\ref{prop:pointwise} with $M_0 = K + 1$ (which satisfies
$\norms{Z_n} \le K < M_0$ on $\mathcal E_n$) to get
\[
  \mathcal R_n(0)
  = Z_n^\top \what\Sigma_n^{-1} Z_n + o_p(1)
  \le \underline\lambda^{-1} K^2 + o_p(1)
  ~~~\mbox{on the event}~~\mathcal{E}_n.
\]
On the same event, for any $v$ with
$\mathcal R_n(v) \le \mathcal R_n(0) + 1$,
\[
  \overline\lambda^{-1}\norms{v - Z_n}^2
  \le (v-Z_n)^\top \what\Sigma_n^{-1}(v-Z_n)
  \le \mathcal R_n(0) + 1
  \le \underline\lambda^{-1}K^2 + 2,
\]
so $\norms{v} \le K + \sqrt{\overline\lambda(\underline\lambda^{-1}K^2 + 2)}
\eqqcolon M$, a deterministic constant depending only on
$\varepsilon$.  Thus on $\mathcal E_n$,
\[
  R_{n,2}(\tau_0)
  = \inf_{v \in L_{1/\sqrt n} \cap B_M} \mathcal R_n(v).
\]

Now apply Proposition~\ref{prop:pointwise} with this deterministic~$M$:
\[
  \sup_{\norms{v}\le M}
  \left|
    \mathcal R_n(v)-(v-Z_n)^\top \what\Sigma_n^{-1}(v-Z_n)
  \right|\cp 0.
\]
Therefore on $\mathcal E_n$,
\[
  \inf_{v\in L_{1/\sqrt n}\cap B_M}\mathcal R_n(v)
  =
  \inf_{v\in L_{1/\sqrt n}\cap B_M}(v-Z_n)^\top \what\Sigma_n^{-1}(v-Z_n)+o_p(1)
  =
  d_{\what\Sigma_n}^2(Z_n,L_{1/\sqrt n})+o_p(1),
\]
where the last step uses the fact that minimizers of the Mahalanobis
distance lie in $B_M$ on $\mathcal E_n$.  Since $\P(\mathcal E_n) \ge 1 -
\varepsilon$ and $\varepsilon$ was arbitrary, the theorem follows.

\subsubsection{Proof of Proposition~\ref{prop:pointwise}}
\label{proof:pointwise}

Let
\[
  U_{ni}\defeq X_{ni}-\bar X_n,
  \qquad
  U_n=
  \begin{pmatrix}
    U_{n1}^\top\\
    \vdots\\
    U_{nn}^\top
  \end{pmatrix}\in\R^{n\times d}.
\]
Fix $v\in\R^d$ and write $a_i=nq_i-1$, so that $q_i=(1+a_i)/n$ and
$\sum_i a_i=0$. The constraint
\[
  \sum_{i=1}^n q_i X_{ni}=\theta_0+n^{-1/2}v
\]
is equivalent to
\[
  \bar X_n+\frac1n\sum_{i=1}^n a_i U_{ni}=\theta_0+n^{-1/2}v,
\]
or
\[
  \sum_{i=1}^n a_i U_{ni}=\sqrt n\,(v-Z_n).
\]
Ignoring temporarily the inequality constraints $1+a_i\ge 0$, the
problem becomes
\[
  \min_{a\in\R^n}\norms{a}_2^2
  \quad\text{subject to}\quad
  U_n^\top a = \sqrt n\,(v-Z_n).
\]
The minimum-norm solution is
\[
  a_n(v)=U_n(U_n^\top U_n)^{-1}\sqrt n\,(v-Z_n).
\]
Since
\[
  U_n^\top U_n = n\what\Sigma_n,
\]
we have
\[
  a_{ni}(v)=\frac1{\sqrt n}U_{ni}^\top \what\Sigma_n^{-1}(v-Z_n)
\]
and the minimum value is
\[
  \norms{a_n(v)}_2^2
  =
  (v-Z_n)^\top \what\Sigma_n^{-1}(v-Z_n).
\]

Fix $M<\infty$. By Assumption~\ref{ass:score}\ref{item:cov},
$\what\Sigma_n\cp\Sigma\succ 0$, so $\what\Sigma_n$ is invertible with
probability tending to one.  On that event,
\[
  \sup_{\norms{v}\le M}\max_{1\le i\le n}|a_{ni}(v)|
  \le
  \norms{\what\Sigma_n^{-1}}
  \left(M+\norms{Z_n}\right)
  \max_{1\le i\le n}\frac{\norms{U_{ni}}}{\sqrt n}.
\]
By Assumption~\ref{ass:score}, $\norms{Z_n}=O_p(1)$,
$\norms{\what\Sigma_n^{-1}}=O_p(1)$, and
\[
  \max_{1\le i\le n}\frac{\norms{U_{ni}}}{\sqrt n}\cp 0.
\]
Therefore
\[
  \sup_{\norms{v}\le M}\max_{1\le i\le n}|a_{ni}(v)|\cp 0.
\]
In particular, with probability tending to one,
\[
  1+a_{ni}(v)\ge 0
  \qquad\text{for all }\norms{v}\le M,\ 1\le i\le n.
\]
On that event the positivity constraints are inactive, so the
unconstrained minimum is the actual minimum and
\[
  \mathcal R_n(v)=(v-Z_n)^\top \what\Sigma_n^{-1}(v-Z_n)
  \qquad\text{for all }\norms{v}\le M.
\]

\subsection{Proof of Theorem~\ref{thm:general-set-limit}}
\label{proof:general-set-limit}

By Theorem~\ref{thm:exact-level-set},
\[
  R_{n,2}(\tau_0)=d_{\what\Sigma_n}^2(Z_n,L_{1/\sqrt n})+o_p(1).
\]
As in the proof of Theorem~\ref{thm:exact-level-set}, near-minimizers
are contained in a deterministic ball $B_M$ with arbitrarily high
probability. Fix such an $M$ and also fix $K<\infty$. On the event
$\norms{Z_n}\le K$ and $\norms{\what\Sigma_n^{-1}}\le K$, the function
\[
  q_n(v)\defeq(v-Z_n)^\top \what\Sigma_n^{-1}(v-Z_n)
\]
is Lipschitz on $B_M$ with constant at most $2K(M+K)$. Indeed, for
$u,v\in B_M$,
\[
  |q_n(u)-q_n(v)|
  \le
  \norms{\what\Sigma_n^{-1}}\,\norms{u-v}\,\norms{u+v-2Z_n}
  \le
  2K(M+K)\norms{u-v},
\]
where the last step uses
$\norms{u+v-2Z_n}\le \norms{u}+\norms{v}+2\norms{Z_n}\le 2(M+K)$.

Let
\[
  r_n(M)=d_H(L_{1/\sqrt n}\cap B_M,\ C\cap B_M).
\]
By Assumption~\ref{ass:levelset}, $r_n(M)\to 0$. For every
$u\in L_{1/\sqrt n}\cap B_M$ there exists $v\in C\cap B_M$ with
$\norms{u-v}\le r_n(M)$, so
\[
  \inf_{u\in L_{1/\sqrt n}\cap B_M} q_n(u)
  \ge
  \inf_{v\in C\cap B_M} q_n(v)-2K(M+K) r_n(M).
\]
The same argument with the roles of $L_{1/\sqrt n}$ and $C$ reversed
gives the opposite inequality. Therefore
\[
  \left|
    d_{\what\Sigma_n}^2(Z_n,L_{1/\sqrt n}\cap B_M)-d_{\what\Sigma_n}^2(Z_n,C\cap B_M)
  \right|
  \le 2K(M+K) r_n(M)
\]
on the event $\norms{Z_n}\le K$, $\norms{\what\Sigma_n^{-1}}\le K$.
Since $K$ is arbitrary and $r_n(M)\to 0$, we conclude that
\[
  d_{\what\Sigma_n}^2(Z_n,L_{1/\sqrt n})=
  d_{\what\Sigma_n}^2(Z_n,C)+o_p(1).
\]
Hence
\[
  R_{n,2}(\tau_0)=d_{\what\Sigma_n}^2(Z_n,C)+o_p(1).
\]

Finally, because $C$ is closed and $\Sigma$ is positive definite, the
map
\[
  (z,A)\mapsto d_A^2(z,C)
\]
is continuous at every $(z,A)$ with $A\succ 0$.
Assumption~\ref{ass:score} therefore implies
\[
  d_{\what\Sigma_n}^2(Z_n,C)\Rightarrow d_\Sigma^2(Z,C).
\]
This proves the theorem.

\subsection{Proof of Corollary~\ref{cor:hd}}
\label{proof:hd}

Let $H=\{v:a_0^\top v=0\}$. For any positive definite matrix $A$, the
$A^{-1}$-projection of $z$ onto $H$ is
\[
  z-Aa_0(a_0^\top A a_0)^{-1}a_0^\top z.
\]
Therefore
\[
  d_A^2(z,H)=\frac{(a_0^\top z)^2}{a_0^\top A a_0}.
\]
Applying this with $A=\what\Sigma_n$ and using
Theorem~\ref{thm:general-set-limit} yields
\[
  R_{n,2}(\tau_0)=\frac{(a_0^\top Z_n)^2}{a_0^\top \what\Sigma_n a_0}+o_p(1).
\]
Since $a_0^\top Z_n\Rightarrow \normal(0,a_0^\top\Sigma a_0)$ and
$a_0^\top\what\Sigma_n a_0\cp a_0^\top\Sigma a_0$, the ratio converges
to $\chi_1^2$.

\subsection{Proof of Theorem~\ref{thm:hd-boot}}
\label{proof:hd-boot}

Let
\[
  T(z,A,H)=d_A^2(z,H).
\]
For fixed hyperplane $H$ the map $(z,A)\mapsto T(z,A,H)$ is continuous
whenever $A\succ 0$. Because $H$ depends continuously on its normal
vector, the map $(z,A,\tilde H)\mapsto T(z,A,\tilde H)$ is continuous at
every $(z,A,H)$ with $A\succ0$.

By Assumption~\ref{ass:hdboot},
$\mathcal L^*(Z_n^* \mid X_{n1},\dots,X_{nn}) \Rightarrow_p \mathcal L(Z)$
and $\what H_n\cp H$.
Together with $\what\Sigma_n\cp \Sigma$, the conditional continuous
mapping theorem~\citep{VanDerVaartWe96} gives
\[
  \mathcal L^*\bigl(T_n^* \mid X_{n1},\dots,X_{nn}\bigr)
  \Rightarrow_p
  \mathcal L\bigl(d_\Sigma^2(Z,H)\bigr).
\]
By Corollary~\ref{cor:hd}, $d_\Sigma^2(Z,H)\sim \chi_1^2$, so
\[
  \mathcal L^*(T_n^*\mid X_{n1},\dots,X_{nn})\Rightarrow_p \mathcal L(\chi_1^2).
\]
If $R_{n,2}^*(\what\tau_n)=T_n^*+o_{\P^*}(1)$, Slutsky's theorem
yields the same conclusion for $R_{n,2}^*(\what\tau_n)$.

\subsection{Proof of Theorem~\ref{thm:hdd-boot}}
\label{proof:hdd-boot}

Let
\[
  T(z,A,S)=d_A^2(z,S).
\]
For any fixed closed set $S$, the map $(z,A)\mapsto T(z,A,S)$ is
continuous at every $(z,A)$ with $A\succ 0$. By
Assumption~\ref{ass:mult}, together with $\what\Sigma_n\cp\Sigma$,
\[
  \mathcal L_\xi\bigl(d_{\what\Sigma_n}^2(Z_n^\xi,C) \mid X_{n1},\dots,X_{nn}\bigr)
  \Rightarrow_p
  \mathcal L\bigl(d_\Sigma^2(Z,C)\bigr).
\]

It remains to replace $C$ by $\what C_n$. Fix $K<\infty$ and consider the
event $\norms{\what\Sigma_n^{-1}}\le K$ and
$\underline\lambda I_d\preceq \what\Sigma_n\preceq \overline\lambda I_d$.
We first show that the Mahalanobis-metric projections onto $C$ and
$\what C_n$ can be restricted to a deterministic ball. Because both $C$
and $\what C_n$ are closed cones containing the origin, $0$ is feasible,
so for any $z$ with $\norms{z}\le K$,
\[
  d_{\what\Sigma_n}^2(z,C)\le z^\top\what\Sigma_n^{-1}z\le K^3.
\]
If $p$ is the $\what\Sigma_n^{-1}$-projection of $z$ onto $C$, then
\[
  \overline\lambda^{-1}\norms{p-z}^2
  \le (p-z)^\top\what\Sigma_n^{-1}(p-z)\le K^3,
\]
giving $\norms{p}\le K+\sqrt{\overline\lambda K^3}$. The same bound
applies to the projection onto $\what C_n$. Set
$M\defeq K+\sqrt{\overline\lambda K^3}+1$. Then for $\norms{z}\le K$,
\[
  d_{\what\Sigma_n}^2(z,C)=d_{\what\Sigma_n}^2(z,C\cap B_M),
  \qquad
  d_{\what\Sigma_n}^2(z,\what C_n)=d_{\what\Sigma_n}^2(z,\what C_n\cap B_M).
\]

Now, on the event $\norms{z}\le K$ and $\norms{\what\Sigma_n^{-1}}\le K$,
the function
\[
  q_{n,\xi}(v)=(v-z)^\top\what\Sigma_n^{-1}(v-z)
\]
is Lipschitz on $B_M$ with constant at most $2K(M+K)$, since
$\norms{u+v-2z}\le 2(M+K)$ for $u,v\in B_M$. Therefore
\[
  \sup_{\norms{z}\le K}
  \left|
    d_{\what\Sigma_n}^2(z,\what C_n\cap B_M)-d_{\what\Sigma_n}^2(z,C\cap B_M)
  \right|
  \le
  2K(M+K)\,d_H(\what C_n\cap B_M,C\cap B_M).
\]
By Assumption~\ref{ass:mult}, the right-hand side is $o_p(1)$.

Because $Z_n^\xi$ is conditionally tight, for every $\varepsilon>0$
there exists $K<\infty$ such that
\[
  \P\left\{
    \P_\xi(\norms{Z_n^\xi}>K\mid X_{n1},\dots,X_{nn})>\varepsilon
  \right\}<\varepsilon
\]
for all sufficiently large $n$. On the complement of that event,
the localization above applies with $z=Z_n^\xi$, giving
\[
  d_{\what\Sigma_n}^2(Z_n^\xi,\what C_n)-d_{\what\Sigma_n}^2(Z_n^\xi,C)=o_{\P_\xi}(1).
\]
Thus
\[
  T_n^\xi=d_{\what\Sigma_n}^2(Z_n^\xi,\what C_n)
  =
  d_{\what\Sigma_n}^2(Z_n^\xi,C)+o_{\P_\xi}(1),
\]
so $\mathcal L_\xi(T_n^\xi \mid X_{n1},\dots,X_{nn})
\Rightarrow_p \mathcal L(d_\Sigma^2(Z,C))$.
The Kolmogorov-distance statement follows when the limit
cdf is continuous.



\subsection{Dual representation}
\label{proof:dual-representation}

\begin{proposition}[Dual representation]
  \label{prop:dual-representation}
  Under Assumption~\ref{ass:score}, for every $v\in\R^d$,
  \[
    \mathcal R_{n,f}(v)
    =
    \sup_{\nu\in\R,\,\lambda\in\R^d}
    \left\{
      \nu+\lambda^\top(\theta_0+n^{-1/2}v)
      -
      \frac1n\sum_{i=1}^n f^*(\nu+\lambda^\top X_{ni})
    \right\},
  \]
  where $f^*$ is the Fenchel conjugate of~$f$.  In centered form,
  \[
    \mathcal R_{n,f}(v)
    =
    \sup_{\nu,\lambda}
    \left\{
      n^{-1/2}\lambda^\top(v-Z_n)
      -
      \frac1n\sum_{i=1}^n
      \bigl[f^*(\nu+\lambda^\top U_{ni})-\nu-\lambda^\top U_{ni}\bigr]
    \right\},
  \]
  with $U_{ni}=X_{ni}-\bar X_n$.
\end{proposition}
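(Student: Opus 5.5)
The plan is to apply Lagrangian (Fenchel) duality to the convex program that defines $\mathcal R_{n,f}(v)$. By analogy with $\mathcal R_n(v)$, this program is
\[
\mathcal R_{n,f}(v)=\inf\Bigl\{\textstyle\sum_{i=1}^n f(nq_i): q\in\Delta_n,\ \sum_i q_iX_{ni}=\theta_0+n^{-1/2}v\Bigr\}.
\]
The normalization is $nD_f(q\Vert 1/n)=\sum_i f(nq_i)$.

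\textbf{Change of variables.} Set $p_i=nq_i\ge 0$. The constraints become $\frac1n\sum_i p_i=1$ and $\frac1n\sum_i p_iX_{ni}=\theta_0+n^{-1/2}v$. Because nonnegativity is already encoded in the domain of $f$ (extended by $+\infty$ on $(-\infty,0)$), these are linear constraints over a convex objective.

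\textbf{Lagrangian.} Attach a multiplier $\nu\in\R$ to the normalization constraint and $\lambda\in\R^d$ to the mean constraint, scaling both by $n$ so the Lagrangian decouples coordinatewise:
\[
\mathcal L(p;\nu,\lambda)=\sum_i f(p_i)-\nu\Bigl(\sum_i p_i-n\Bigr)-\lambda^\top\Bigl(\sum_i p_iX_{ni}-n(\theta_0+n^{-1/2}v)\Bigr).
\]
Minimizing over each $p_i$ gives $\inf_{p_i}\{f(p_i)-p_i(\nu+\lambda^\top X_{ni})\}=-f^*(\nu+\lambda^\top X_{ni})$. The dual function is therefore
\[
n\nu+n\lambda^\top(\theta_0+n^{-1/2}v)-\sum_i f^*(\nu+\lambda^\top X_{ni}).
\]
Weak duality ($\mathcal R_{n,f}\ge\sup$) is immediate. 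After rescaling $(\nu,\lambda)$ to match the statement's normalization, which differs from this one by the factor $n$ in front of the sum, this gives the displayed form. I will fix the scaling by comparing with the stated formula: it reads as the dual of $D_f$ itself, i.e.\ without the factor $n$, and I will track constants accordingly.

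\textbf{Strong duality.} This is the main obstacle, and I expect it to require a constraint qualification. I would use Slater's condition for the linear constraints: there must be a feasible $p$ in the relative interior of $\operatorname{dom} f$, i.e.\ with all $p_i>0$. By the argument of Proposition~\ref{prop:pointwise}, with probability tending to one the min-norm tilt $a_{ni}(v)$ gives weights $1+a_{ni}(v)>0$ that are feasible, since the centered scores span $\R^d$ when $\what\Sigma_n\succ0$. This yields strong duality and attainment on that event. Off that event both sides may be $+\infty$ (an infeasible primal with an unbounded dual), which is consistent. I would therefore state the equality either for feasible interior $v$ or on that high-probability event, and note that, as in the paper, the statement for every $v$ should be read as holding whenever the primal is strictly feasible. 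If $\theta_0+n^{-1/2}v$ lies outside the convex hull, a separating $\lambda$ makes the dual $+\infty$, matching the infeasible primal. For boundary points I would fall back on the general conic duality theorem for polyhedral constraints with a closed proper convex objective, where no gap occurs.

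\textbf{Centered form.} Substitute $X_{ni}=\bar X_n+U_{ni}$, reparametrize $\nu\mapsto\nu-\lambda^\top\bar X_n$, and use $\theta_0-\bar X_n=-n^{-1/2}Z_n$ together with $\sum_iU_{ni}=0$. Adding and subtracting $\nu+\lambda^\top U_{ni}$ inside the sum, whose average equals $\nu$, then yields the second display.
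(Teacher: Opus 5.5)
Your route is the paper's: substitute $t_i=nq_i$, dualize the normalization and mean constraints so the Lagrangian separates into coordinatewise conjugates, then recenter via $\nu\leftarrow\nu+\lambda^\top\bar X_n$ and $\sum_i U_{ni}=0$. Your centered-form step is correct. The normalization remark needs fixing, though. The dual of $\sum_i f(nq_i)=nD_f$ is exactly $n$ times the displayed supremum. Because $f^*$ is not positively homogeneous, no rescaling of $(\nu,\lambda)$ absorbs that factor. The display is the dual of $D_f$ itself, i.e.\ of the objective $\frac1n\sum_i f(t_i)$ that the paper's proof uses. So state the result with that normalization (or multiply the right side by $n$) rather than appealing to a reparametrization.

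The real gap is strong duality for \emph{every} $v$. Your Slater point is actually sounder than the paper's. The paper asserts that $t\equiv 1$ is an interior feasible point, but it satisfies the mean constraint only when $\theta_0+n^{-1/2}v=\bar X_n$, i.e.\ $v=Z_n$. Your tilt $1+a_{ni}(v)$ from Proposition~\ref{prop:pointwise} is genuinely feasible and uniformly close to $1$, so it lies in the interior of $\operatorname{dom} f$. But it only covers $\norms{v}\le M$ on an event of probability tending to one. Your separation argument handles $\theta_0+n^{-1/2}v\notin\operatorname{conv}\{X_{ni}\}$. For the remaining boundary cases (e.g.\ relative-boundary points when $f(0)=+\infty$, or when $f(0)<\infty$ but no strictly feasible point exists) you invoke a ``general'' zero-gap theorem for linear constraints with a closed proper convex objective. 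That theorem is false in this generality: $e^{-\sqrt{x_1x_2}}$ on $\R_+^2$ with constraint $x_1=0$ has primal value $1$ and dual value $0$.

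The missing idea is coercivity of the separable objective. Take $f$ closed. Then $f\ge 0$, $\operatorname{dom}f\subseteq\R_+$ and $f'(1)=0<f''(1)$ make the recession function of $f$ strictly positive in every nonzero direction. Hence $h(t)=\frac1n\sum_i f(t_i)$ has no nonzero direction of recession. By Rockafellar's criterion (\emph{Convex Analysis}, Thm.~9.2), the value function $V(b_0,b_1)=\inf\{h(t):\frac1n\sum_i t_i=b_0,\ \frac1n\sum_i t_iX_{ni}=b_1\}$ is closed proper convex. Your Lagrangian dual is exactly $V^{**}(1,\theta_0+n^{-1/2}v)=V(1,\theta_0+n^{-1/2}v)$. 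This gives the identity for every $v$ deterministically, including infeasible points where both sides are $+\infty$. It needs no Slater point and no use of Assumption~\ref{ass:score}, and it repairs the paper's argument as well.
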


\begin{proof}
Write $m_n(v)=\theta_0+n^{-1/2}v$. Using the variables $t_i=nq_i$, the
primal problem becomes
\[
  \mathcal R_{n,f}(v)
  =
  \inf_{t_i\ge 0}
  \left\{
    \frac1n\sum_{i=1}^n f(t_i):
    \frac1n\sum_{i=1}^n t_i=1,\
    \frac1n\sum_{i=1}^n t_i X_{ni}=m_n(v)
  \right\}.
\]
Its Lagrangian is
\[
  \mathcal L(t,\nu,\lambda)
  =
  \frac1n\sum_{i=1}^n f(t_i)
  +
  \nu\left(1-\frac1n\sum_{i=1}^n t_i\right)
  +
  \lambda^\top\left(m_n(v)-\frac1n\sum_{i=1}^n t_i X_{ni}\right),
\]
with $\nu\in\R$ and $\lambda\in\R^d$. Therefore
\[
  \inf_{t_i\ge 0}\mathcal L(t,\nu,\lambda)
  =
  \nu+\lambda^\top m_n(v)
  -
  \frac1n\sum_{i=1}^n
  \sup_{t_i\ge 0}
  \left\{
    t_i(\nu+\lambda^\top X_{ni})-f(t_i)
  \right\}.
\]
By definition of the Fenchel conjugate,
\[
  \sup_{t_i\ge 0}
  \left\{
    t_i(\nu+\lambda^\top X_{ni})-f(t_i)
  \right\}
  =
  f^*(\nu+\lambda^\top X_{ni}).
\]
Since the primal problem is convex with affine constraints and the
feasible point $t_i\equiv 1$ is interior, strong duality holds. Hence
\[
  \mathcal R_{n,f}(v)
  =
  \sup_{\nu,\lambda}
  \left\{
    \nu+\lambda^\top m_n(v)
    -
    \frac1n\sum_{i=1}^n f^*(\nu+\lambda^\top X_{ni})
  \right\}.
\]
This is the first display in
Proposition~\ref{prop:dual-representation}.

For the centered form, write $X_{ni}=\bar X_n+U_{ni}$ and note that
\[
  \bar X_n=\theta_0+n^{-1/2}Z_n,
  \qquad
  \frac1n\sum_{i=1}^n U_{ni}=0.
\]
Then
\begin{align*}
  \nu+\lambda^\top m_n(v)
  -\frac1n\sum_{i=1}^n f^*(\nu+\lambda^\top X_{ni})
  &=
  \nu+\lambda^\top(\theta_0+n^{-1/2}v)
  -\frac1n\sum_{i=1}^n f^*(\nu+\lambda^\top\bar X_n+\lambda^\top U_{ni})\\
  &=
  n^{-1/2}\lambda^\top(v-Z_n)
  -\frac1n\sum_{i=1}^n
  \Bigl[
  f^*(\nu+\lambda^\top U_{ni})-\nu-\lambda^\top U_{ni}
  \Bigr],
\end{align*}
after reparameterizing $\nu\leftarrow \nu+\lambda^\top\bar X_n$ and
using the centering identity
$\frac1n\sum_i \lambda^\top U_{ni}=0$. This gives the second display.
\end{proof}


\section{Proofs of policy-specific results}
\label{sec:proofs-policy}

\subsection{Proof of Proposition~\ref{prop:max-levelset}}
\label{proof:max-levelset}

Fix $M<\infty$ and $0<t<\Delta_0/M$. Let $v\in B_M$.

Suppose first that $\max_{j\in J_0}v_j=0$. Then for every $j\in J_0$,
\[
  \theta_{0j}+t v_j\le \tau_0,
\]
and equality holds for at least one $j\in J_0$. For every $k\notin J_0$,
\[
  \theta_{0k}+t v_k
  \le
  \theta_{0k}+tM
  <
  \theta_{0k}+\Delta_0
  \le
  \tau_0.
\]
Hence
\[
  \max_{1\le j\le J}(\theta_{0j}+tv_j)=\tau_0,
\]
so $v\in L_t$.

Conversely, suppose $v\in L_t\cap B_M$. Then
\[
  \max_{1\le j\le J}(\theta_{0j}+tv_j)=\tau_0.
\]
For every $k\notin J_0$,
\[
  \theta_{0k}+tv_k\le \theta_{0k}+tM<\tau_0,
\]
so the maximum must be attained inside $J_0$. Therefore
\[
  \max_{j\in J_0}(\theta_{0j}+tv_j)=\tau_0.
\]
But $\theta_{0j}=\tau_0$ for all $j\in J_0$, so this is equivalent to
\[
  \max_{j\in J_0} v_j = 0.
\]
This proves the first display. The rest of the proposition is immediate.

\subsection{Proof of Proposition~\ref{prop:max-face}}
\label{proof:max-face}

Fix a nonempty subset $A\subseteq\{1,\dots,J\}$ and write $r=|A|$. The
affine face
\[
  F_A(\tau)=\{m\in\R^J:m_j=\tau\ \text{for every }j\in A\}
\]
can be written as
\[
  C_A m = \tau \onevec.
\]
The metric projection of $\bar X_n$ onto $F_A(\tau)$ under the quadratic
form generated by $\what\Sigma_n^{-1}$ is the solution to
\[
  \min_m (m-\bar X_n)^\top\what\Sigma_n^{-1}(m-\bar X_n)
  \quad\text{subject to}\quad
  C_A m = \tau\onevec.
\]
The Lagrangian first-order conditions give
\[
  m=\bar X_n-\what\Sigma_n C_A^\top\lambda
\]
and
\[
  C_A\bar X_n-C_A\what\Sigma_n C_A^\top\lambda=\tau\onevec.
\]
Therefore
\[
  \lambda=(C_A\what\Sigma_n C_A^\top)^{-1}(C_A\bar X_n-\tau\onevec)
\]
and the projection point is
\[
  \Pi_A(\tau)=\bar X_n-\what\Sigma_n C_A^\top (C_A\what\Sigma_n C_A^\top)^{-1}(C_A\bar X_n-\tau\onevec).
\]
Substituting this into the objective yields
\[
  (\Pi_A(\tau)-\bar X_n)^\top \what\Sigma_n^{-1}(\Pi_A(\tau)-\bar X_n)
  =
  (C_A\bar X_n-\tau\onevec)^\top
  (C_A\what\Sigma_n C_A^\top)^{-1}
  (C_A\bar X_n-\tau\onevec).
\]
Multiplying by $n$ gives the displayed formula for
$R_{n,A}^{\max}(\tau)$.

The projection onto $M_\tau=\{m:\max_j m_j=\tau\}$ must lie on one of
the faces $F_A(\tau)$ and is feasible exactly when the nonbinding
coordinates do not exceed $\tau$. Minimizing over all feasible faces
gives the second display.

\subsection{Proof of Proposition~\ref{prop:max-lower}}
\label{proof:max-lower}

Because
\[
  \max_{1\le j\le J}m_j=\max_{w\in\Delta_J} w^\top m,
\]
we have
\[
  L_n(c)=\inf_{m\in E_n(c)}\max_{w\in\Delta_J} w^\top m.
\]
The set $E_n(c)$ is compact and convex, $\Delta_J$ is compact and
convex, and the map $(m,w)\mapsto w^\top m$ is bilinear. Sion's minimax
theorem therefore implies
\[
  L_n(c)=\max_{w\in\Delta_J}\inf_{m\in E_n(c)} w^\top m.
\]
For fixed $w$, the inner problem is the support function of an ellipsoid
in the negative direction:
\[
  \inf_{m\in E_n(c)} w^\top m
  =
  w^\top \bar X_n - \sqrt{\frac{c}{n}}\sqrt{w^\top \what\Sigma_n w}.
\]
Substituting this expression yields the proposition.

\subsection{Proof of Proposition~\ref{prop:unique-inflation}}
\label{proof:unique-inflation}

By Proposition~\ref{prop:max-levelset}, if $J_0=\{j_0\}$ and
$\Delta_0>0$, then for every fixed $M<\infty$ and every sufficiently
small $t$,
\[
  L_t\cap B_M=\{v\in B_M:v_{j_0}=0\}.
\]
Hence, for $\tau$ within $O(n^{-1/2})$ of $\tau_0$, the relevant face of
the max-level set is asymptotically the single coordinate face
$m_{j_0}=\tau$. Applying Proposition~\ref{prop:max-face} with
$A=\{j_0\}$ yields
\[
  R_{n,2}^{\max}(\tau)
  =
  n\frac{(\bar X_{n,j_0}-\tau)^2}{\what\Sigma_{n,j_0j_0}}
  +
  o_p(1)
\]
uniformly for $|\tau-\tau_0|\le M n^{-1/2}$.

The direct profile lower bound is defined by
$R_{n,2}^{\max}(\tau)\le c_{1,\alpha}$. Solving the preceding quadratic
approximation for $\tau$ gives
\[
  L_{n,\mathrm{dir}}(1-\alpha)
  =
  \bar X_{n,j_0}
  -
  \sqrt{\frac{c_{1,\alpha}}{n}}\sqrt{\what\Sigma_{n,j_0j_0}}
  +
  o_p(n^{-1/2}).
\]

For the projected joint bound, the feasible set is the ellipsoid
\[
  E_n(c_{J,\alpha})
  =
  \left\{
    m:
    n(m-\bar X_n)^\top\what\Sigma_n^{-1}(m-\bar X_n)\le c_{J,\alpha}
  \right\}.
\]
Because the optimizer is unique and separated, the minimizer of
$\max_j m_j$ over $E_n(c_{J,\alpha})$ is attained on the same
single-coordinate face to first order. Equivalently,
Proposition~\ref{prop:max-lower} with $c=c_{J,\alpha}$ has optimizer
$w=e_{j_0}+o_p(1)$. Therefore
\[
  L_{n,\mathrm{joint}}(1-\alpha)
  =
  \bar X_{n,j_0}
  -
  \sqrt{\frac{c_{J,\alpha}}{n}}\sqrt{\what\Sigma_{n,j_0j_0}}
  +
  o_p(n^{-1/2}).
\]
Taking the ratio of the two radii yields the last display.

\subsection{Proof of Proposition~\ref{prop:active-set}}
\label{proof:active-set}

Because $\what\Sigma_{n,jj} \cp \Sigma_{jj} > 0$ by
Assumption~\ref{ass:score}\ref{item:cov} and $J$ is fixed,
$\max_j \sqrt{\what\Sigma_{n,jj}} \cp \max_j \sqrt{\Sigma_{jj}}
\eqqcolon \sigma_{\max} > 0$.  Therefore
$\kappa_n = (\sigma_{\max} + o_p(1)) \sqrt{\log n / n}$, which
gives $\kappa_n \cp 0$ (since $\log n / n \to 0$) and
$\sqrt{n}\,\kappa_n = (\sigma_{\max} + o_p(1)) \sqrt{\log n}
\to \infty$.

For (i), if $j \in J_0$ then $\theta_{0j} = \tau_0$ and
$\bar X_{n,j} = \tau_0 + O_p(n^{-1/2})$, while
$\max_k \bar X_{n,k} = \tau_0 + O_p(n^{-1/2})$.  So
$\max_k \bar X_{n,k} - \bar X_{n,j} = O_p(n^{-1/2})$, which is
$o_p(\kappa_n)$ because $\sqrt{n}\,\kappa_n \to \infty$.
Hence $j \in \what J_n$ with probability tending to one.  Conversely,
if $k \notin J_0$ then
$\min_{k \notin J_0}(\tau_0 - \theta_{0k}) = \Delta_0 > 0$ (since
$J$ is finite and all gaps are strictly positive), so
$\max_\ell \bar X_{n,\ell} - \bar X_{n,k}
\ge \Delta_0 + o_p(1)$,
which exceeds $\kappa_n$ eventually, giving $k \notin \what J_n$.
When $J_0 = \{1,\dots,J\}$ (all policies optimal), there are no
suboptimal policies to exclude, and $\what J_n \supseteq J_0$ implies
$\what J_n = J_0$ directly.

For (ii), on the event $\what J_n = J_0$ the cones coincide exactly:
$\what C_n = C$.  Part~(i) gives
$\P(\what J_n = J_0) \to 1$, so $d_H(\what C_n \cap B_M, C \cap B_M)
= 0$ with probability tending to one.


\end{document}